\documentclass[12pt]{article}
\usepackage[margin=1in]{geometry}
\usepackage[T1]{fontenc}
\usepackage{lmodern}
\usepackage{amsmath,amssymb,amsthm,mathtools}
\usepackage{booktabs}
\usepackage{longtable}
\usepackage{array}
\usepackage{graphicx}
\graphicspath{{figures/}}
\usepackage{placeins}
\usepackage{float}
\usepackage[font=small,labelfont=bf,labelsep=period]{caption}
\usepackage{enumitem}
\usepackage{microtype}
\usepackage[hidelinks,hypertexnames=false]{hyperref}
\hypersetup{
  pdftitle={Equal-charge projection of the N=4 index},
  pdfauthor={Miguel Tierz},
  pdfsubject={Equal-charge projection, multigraviton index, finite-rank coefficients},
  pdfkeywords={superconformal index, equal-charge projection, AdS5 black holes, giant gravitons}
}
\allowdisplaybreaks[1]
\setlist{topsep=0.35em,itemsep=0.2em,parsep=0.1em,leftmargin=*}
\renewcommand{\arraystretch}{1.06}

\newcommand{\CT}{\operatorname{CT}}
\newcommand{\mc}{\mathcal}

\newcommand{\pfn}{\mathsf{p}}

\theoremstyle{plain}
\newtheorem{theorem}{Theorem}[section]
\newtheorem{proposition}[theorem]{Proposition}
\newtheorem{lemma}[theorem]{Lemma}
\newtheorem{corollary}[theorem]{Corollary}

\theoremstyle{remark}
\newtheorem{remark}[theorem]{Remark}

\newcommand{\arxiv}[1]{\href{https://arxiv.org/abs/#1}{arXiv:#1}}

\title{Equal-charge projection of the $\mc N=4$ index:\\ exact large-$N$ formula and finite-rank $U(3)$ coefficients}
\author{Miguel Tierz\\[4pt]
\textit{Shanghai Institute for Mathematics and}\\
\textit{Interdisciplinary Sciences (SIMIS)}\\
\textit{No.~657 Songhu Road, Yangpu District, Shanghai 200433, China}\\[2pt]
\texttt{tierz@simis.cn}}
\date{}

\begin{document}
\raggedbottom
\maketitle

\begin{abstract}
The equal-charge branch of supersymmetric rotating AdS$_5$ black holes has $Q_1=Q_2=Q_3$.  The corresponding microcanonical sector of the $\mc N=4$ superconformal index is obtained by projecting to equal charges, or equivalently by extracting the constant term in the two charge-difference fugacities. 
We prove that for the large-$N$ multigraviton sector the projected index factorizes exactly as
\[
  \mc I^{\rm eqQ}_\infty(x,p)
  =\prod_{k\ge1}(1-p^kx^{3k})(1-p^{-k}x^{3k})
    \sum_{n\ge0}\pfn(n)^3x^{6n},
\]
where $\pfn(n)$ is the partition function. This factorization gives, for every spin sector, an explicit onset energy below which the large-$N$ coefficient is zero.  Exact $U(3)$ computations show that finite-rank coefficients can nevertheless appear at energies where the large-$N$ coefficient vanishes, including beyond the classical $U(3)$ black-hole bound.  We also determine the full line $j=6J_R+6$.  In particular, with $j^*(J_R)$ denoting this large-$N$ onset energy,
\[
  d_3^{\rm eqQ}(87,\tfrac{27}{2})=1,
  \qquad
  j^*(\tfrac{27}{2})-87=1554,
\]
and the first giant-graviton sector already contributes one unit at this point.  All coefficients are coefficients of the $(-1)^F$-graded index, not positive degeneracies.  The main conclusion is that the high-spin tail survives the exact equal-charge projection.
\end{abstract}

\section{Introduction}
\label{sec:intro}

The superconformal index of $\mc N=4$ $U(N)$ Yang--Mills theory~\cite{KMMR,Romelsberger} is a protected, $(-1)^F$-graded count of BPS states~\cite{Witten-index}.  In the AdS/CFT correspondence~\cite{Maldacena}, suitable large-charge limits of this index reproduce the entropy of supersymmetric AdS$_5$ black holes~\cite{CKMN,CBM,BM,Honda,ACMM,Ardehali-Cardy,BCSZZ}, building on earlier three-dimensional localization results~\cite{BZ}.  Subleading corrections~\cite{GHLPZ}, finite-rank Bethe-Ansatz and residue-sum techniques~\cite{GHLPZ-BA,vLMR}, and the finite-$N$ BPS classification program~\cite{GGKM,Chang-Yin,CL-fortuity,CL-words,CCKLL} extend this comparison beyond leading order.  A related test based on fortuity classification in matrix-model BPS spectra was developed in~\cite{Tierz-BPS}.

Recent exact finite-$N$ calculations~\cite{ACKK,Murthy-GG,Murthy-growth,PZ} show that the index contains a high-spin tail.  In the equal-fugacity notation
\begin{equation}
  \mc I_N(x,p)=\sum_{j,J_R} d_N(j,J_R)\,x^j\,p^{2J_R},
  \label{eq:dN-def}
\end{equation}
the coefficients can be nonzero even when $J_R$ lies beyond the classical black-hole existence bound~\cite{PZ}.  Here $j\in\mathbb Z_{\ge0}$ is the energy grading and $J_R\in\frac12\mathbb Z$ is the right-moving angular momentum.  The coefficients $d_N(j,J_R)$ are protected signed integers, because the trace runs over $\tfrac{1}{16}$-BPS states annihilated by one chosen supercharge and its conjugate~\cite{KMMR}.\footnote{The gradings $(j,2J_R)$ are fixed by their values on the elementary BPS letters, $(2,0)$ for each of the three chiral scalars and $(3,\pm1)$ for the two BPS derivatives, cf.~eq.~\eqref{eq:fmref}; in the conventions of~\cite{KMMR,PZ} this corresponds to $j=2(E+J_L)$, with $E$ the conformal weight and $J_L,J_R$ the left and right angular momenta.}  The fugacities $x$ and $p$ are treated as formal variables, with $|x|<1$ ensuring convergence of the matrix integral.  In this equal-fugacity form the three $R$-charges $Q_1,Q_2,Q_3$ are suppressed.  Possible interpretations of the tail include finite-$N$ wrapped-brane corrections in the giant-graviton expansion~\cite{GE,Imamura,LR,Murthy-GG}, the grey-galaxy phase predicted for sufficiently asymmetric angular momenta in the equal-charge large-$N$ index~\cite{grey1,grey2,CJKKLMP,Bajaj-grey}, and more general partially deconfined phases~\cite{AMMPVR,AArdehali}.

The equal-charge sector is singled out for a physical reason: the equal-charge branch of supersymmetric rotating AdS$_5$ black holes has $Q_1=Q_2=Q_3$.  The dual microcanonical ensemble should therefore fix the two charge differences to zero, rather than average over them.

The distinction between equal fugacities and equal charges is simple but important.  Let $y_1,y_2$ be formal Laurent variables conjugate to $Q_1-Q_3$ and $Q_2-Q_3$.  Setting $y_1=y_2=1$ imposes equal chemical potentials, but it does not restrict the trace to states with $Q_1=Q_2=Q_3$.  In the fully refined index a term
\begin{equation}
  c\,x^j\,p^{2J_R}\,y_1^{\,Q_1-Q_3}\,y_2^{\,Q_2-Q_3}
  \label{eq:refined-term}
\end{equation}
contributes to the equal-fugacity coefficient for every value of $(Q_1-Q_3,Q_2-Q_3)$.  Thus the equal-fugacity index at fixed $(j,J_R)$ still sums over all charge assignments.  The equal-charge projection instead extracts the coefficient of $y_1^0y_2^0$, i.e.\ the sector $Q_1-Q_3=Q_2-Q_3=0$.  This projection fixes the two charge differences, but it does not turn the index into a positive degeneracy at fixed five charges: the common charge and $J_L$ are still summed at fixed $(j,J_R)$.  The exact projection was recognized in the finite-$N$ study of~\cite{PZ}, but left aside as computationally difficult.  Explicitly~\cite{KMMR}, the equal-fugacity index is
\begin{equation}
  \mc I_N(x,p)=\frac{1}{N!}
  \int_{[0,2\pi)^{N-1}}
  |\Delta(\theta)|^2\,
  \exp\!\left[\sum_{m=1}^\infty\frac{f_m(x,p)}{m}\,
  S_m(\theta)\right]
  \frac{d\theta}{(2\pi)^{N-1}},
  \label{eq:IN}
\end{equation}
\noindent with
\begin{align}
  f_m(x,p)&=1-\frac{(1-x^{2m})^3}
  {(1-p^m x^{3m})(1-x^{3m}/p^m)},\notag\\
  |\Delta(\theta)|^2&=\textstyle\prod_{a<b}
  |e^{i\theta_a}-e^{i\theta_b}|^2,\qquad
  S_m(\theta)=N+2\textstyle\sum_{a<b}\cos m(\theta_a-\theta_b).
\end{align}
The equal charges $Q_1=Q_2=Q_3$ correspond to the diagonal $U(1)\subset U(1)^3$.  The general $U(1)^3$ supersymmetric AdS$_5$ black holes have three charges and two angular momenta, subject to one constraint relating them~\cite{KLR}.  On the equal-charge branch there is a single independent $R$-charge and two independent angular momenta; this is the $SO(6)$ gauged-supergravity construction of \cite{CCLP-U13}, equivalently their minimal gauged-supergravity solution~\cite{CCLP}.  The Gutowski--Reall black hole~\cite{GR}, the first supersymmetric AdS$_5$ black hole, is the further specialization with equal angular momenta.  The equal-charge, arbitrary-$J_R$ sector studied here is thus the natural finite-rank analogue of these black holes.  On the equal-charge slice the charge dependence of the Bekenstein--Hawking entropy $S_{\rm BH}=2\pi\sqrt{Q_1Q_2+Q_1Q_3+Q_2Q_3-\tfrac{\pi}{4G_Ng^3}(J_1+J_2)}$~\cite{HHZ,CKMN,Honda} becomes the symmetric combination $3Q^2$, while the dependence on $J_1+J_2$ is retained.
On the gravity side, a charged rotating hairy phase in the same equal-charge, equal-angular-momentum sector was constructed in~\cite{DMS}; in the region where it coexists with the supersymmetric black-hole branch above, this phase dominates the microcanonical ensemble.

\medskip\noindent\textit{Relation to the large-$N$ grey-galaxy analysis.}
The large-$N$ grey-galaxy analysis of~\cite{CJKKLMP} predicts a
microcanonical phase diagram of the index.  On the equal-charge cut
$Q_1=Q_2=Q_3$, the black-hole phase dominates at small $|J_R|$, while
a rank-two grey-galaxy phase dominates at larger $|J_R|$.  The
numerical test in that work expands an equal-potential index in the
variables $A=(Q_1+Q_2+Q_3)/3+J_L$ and $J_R$.  Here $J_L$ is the left-moving angular momentum.  Thus it fixes the
average electric charge but still sums over charge differences.

\paragraph{Main question.}
The present paper asks the complementary finite-rank question: after the exact projection $Q_1=Q_2=Q_3$, does the high-spin tail remain, and how much of it is already present in the large-$N$ multigraviton sector?

We carry out that projection in two steps, first at large $N$ and
then at finite rank.  The large-$N$ step gives the multigraviton index, the
closed formula, Theorem~\ref{thm:main},
\begin{equation}
  \mc I_\infty^{\rm eqQ}(x,p)
  =\Pi(x,p)\sum_{n\ge0}\pfn(n)^3x^{6n},
  \qquad
  \Pi(x,p)=\prod_{k\ge1}(1-p^kx^{3k})(1-p^{-k}x^{3k}).
  \label{eq:intro-main-formula}
\end{equation}
Here $\pfn(n)$ is the number of integer partitions of $n$:
\[
  \sum_{n\ge0}\pfn(n)\,q^n=\prod_{k\ge1}(1-q^k)^{-1}.
\]
The formula has three transparent ingredients:
\begin{enumerate}[label=(\roman*)]
\item the plethystic exponential turns the single-particle tower into the multigraviton product;
\item the equal-charge constant term performs a diagonal extraction among the three $R$-charge partition generators, producing the cube $\pfn(n)^3$ and the grading $x^{6n}$;
\item Euler's pentagonal theorem leaves nonzero terms only at the generalized pentagonal exponents in the prefactor depending on the angular fugacity $p$.
\end{enumerate}
At fixed spin $m=2J_R$, the possible multigraviton contributions are
controlled by pairs of generalized pentagonal numbers whose difference is
$m$.  Proposition~\ref{prop:onset-divisor} turns this into a finite divisor
minimization for $2m$, giving the onset energy and the onset sign directly
from divisor data.
The intervals used below are therefore exact consequences of the product formula: in them the multigraviton coefficient is exactly zero, not merely numerically small.

The finite-rank step uses this exact large-$N$ result as a test.  The coefficients of
$\Pi$ are supported on generalized pentagonal numbers~\cite{Andrews},
\[
  \mc P=\bigl\{\,\tfrac{k(3k-1)}{2}:k\in\mathbb Z\,\bigr\}
  =\{0,1,2,5,7,12,15,22,26,\ldots\}.
\]
For fixed $J_R$ define the onset energy $j^*(J_R)$ to be the first energy at
which the large-$N$ coefficient can be nonzero.  The interval
\[
  6J_R\le j<j^*(J_R)
\]
is then an interval where the large-$N$ coefficient vanishes: by construction,
$d_\infty^{\rm eqQ}(j,J_R)=0$ throughout it.  A nonzero finite-rank
coefficient in this interval is therefore not supplied by the multigraviton index.
Throughout, ``multigraviton'' refers strictly to the large-$N$ index
$\mc I_\infty$, the plethystic exponential of the single-particle index.
At finite rank, write
\[
  G_N^{\rm eqQ}:=d_N^{\rm eqQ}-d_\infty^{\rm eqQ}.
\]
There are three relevant possibilities.
\begin{itemize}
\item \textbf{No finite-rank correction.}  If $d_N^{\rm eqQ}=d_\infty^{\rm eqQ}$, then $G_N^{\rm eqQ}=0$.
\item \textbf{Absent at large $N$, present at finite rank.}  If $d_\infty^{\rm eqQ}=0$ but $d_N^{\rm eqQ}\neq0$, the coefficient is absent from the multigraviton index and comes from finite rank.
\item \textbf{Finite-rank modification.}  If $d_\infty^{\rm eqQ}\neq0$ and $d_N^{\rm eqQ}\neq d_\infty^{\rm eqQ}$, finite-rank corrections modify the multigraviton value, possibly by partial cancellation, exact cancellation, sign reversal, or an increase in magnitude.
\end{itemize}
We will highlight the cases that also lie beyond the classical $U(3)$ black-hole
existence bound $J_R^{\rm BH}(j)$, defined in eq.~\eqref{eq:JRBH}.  These are precisely the coefficients satisfying
\begin{equation}
  J_R>J_R^{\rm BH}(j),\qquad
  d_\infty^{\rm eqQ}(j,J_R)=0,
  \qquad
  d_3^{\rm eqQ}(j,J_R)\ne0.
\end{equation}
For example, at $j=108$ the beyond-boundary spin sectors are
\begin{equation}
\begin{array}{c|ccccc}
  j=108,\quad J_R & 14 & 15 & 16 & 17 & 18\\
\hline
  d_\infty^{\rm eqQ} & 0 & 0 & 0 & 1 & 0\\
  d_3^{\rm eqQ}      & -5648 & -162 & -1 & 1 & 0
\end{array}.
\label{eq:strip108}
\end{equation}
Thus three adjacent sectors are absent from the multigraviton answer but
present at finite rank.  The next sector, $J_R=17$, is the first large-$N$
onset, while $J_R=18=j/6$ lies on the kinematic boundary, where
Proposition~\ref{prop:frontier} gives $G_N^{\rm eqQ}=0$ for all $N$.
Table~\ref{tab:onset} gives $j^*(27/2)=1641$, so the example found here with
the largest depth below onset is
\[
  d_3^{\rm eqQ}(87,27/2)=1,
  \qquad j^*(27/2)-87=1641-87=1554.
\]
Thus the high-spin tail is not created merely by summing over unequal $R$-charge assignments.

\paragraph{Scope of the claims.}
We separate analytic results, exact finite-rank computations, and interpretive evidence.  Analytically, we prove the large-$N$ factorization and its charge-resolved refinement, the onset formula for $j^*(J_R)$, the intervals where the large-$N$ coefficient vanishes, the rank-independence of the kinematic boundary, and the formula on the first adjacent line $j=6J_R+6$.  In particular, the zero intervals used below follow from the
pentagonal/divisor structure of the large-$N$ product, rather than from a
finite truncation of the series.  Computationally, the displayed $U(3)$ coefficients are obtained exactly in arithmetic over integers, not floating-point arithmetic; these include the complete $j=66$ slice and the selected high-spin examples.  Because the refined $U(3)$ integrand factorizes as $U(1)\times SU(3)$ when the common center acts trivially on adjoint letters, Table~\ref{tab:su3-adjoint-check} also reports genuine $SU(3)$ coefficients at selected targets, although not a complete $SU(3)$ scan.

At the sector level, a direct computation gives $h_{3,1}^{\rm eqQ}(87,\tfrac{27}{2})=1$, where $h_{N,M}^{\rm eqQ}$ is the projected contribution of the $M$-th giant-graviton sector defined in eq.~\eqref{eq:h-def}.  The remaining sector checks and the exact aggregate relation are in Section~\ref{sec:GG-decomposition}; they are not inputs to the main support claim.  Broader wrapped-brane interpretations and near-boundary growth patterns should be read as observations suggested by the data, not as theorems.

Three caveats should remain visible throughout:
\begin{enumerate}[label=(\alph*)]
\item the coefficients are $(-1)^F$-graded signed coefficients of the index, not positive BPS degeneracies;
\item the main finite-rank data set is a $U(3)$ data set, while the $SU(3)$ entries are selected checks of the interacting sector;
\item the aggregate correction $G_3^{\rm eqQ}$ is known exactly, but only selected $M$-giant sectors are evaluated individually.
\end{enumerate}

\paragraph{Organization.}
Section~\ref{sec:theorem} defines the projection, proves the equal-charge and charge-resolved large-$N$ factorization formulae, and records the formulas on $j=6J_R$ and $j=6J_R+6$.  Section~\ref{sec:zero-intervals} uses the factorized large-$N$ formula to determine the onset energy $j^*(J_R)$ and the resulting intervals where the large-$N$ coefficient vanishes.  Section~\ref{sec:eqQcomputation} gives the finite-rank computation, the main $U(3)$ examples, and the selected $SU(3)$ check.  Section~\ref{sec:BH-physics} records the black-hole entropy comparison and the partial giant-graviton sector check.  Section~\ref{sec:discussion} summarizes the interpretation, limitations, and remaining checks.

\section{Equal-charge projection of the multigraviton index}
\label{sec:theorem}

This section sets up the projection and the large-$N$ benchmark used throughout the paper.  We prove the projected large-$N$ factorization (Theorem~\ref{thm:main}), record the charge-resolved refinement, and give the finite-$N$ formulas on the two boundary lines $j=6J_R$ and $j=6J_R+6$.

\paragraph{Convention.}
We work with the $U(N)$ index, retaining the free $U(1)$ center.  This is the natural setting for the matrix-integral formula~\eqref{eq:IN} and the giant-graviton expansion~\cite{GE,Imamura,LR}, and it matches the conventions of~\cite{PZ,ACKK}.  In AdS$_5$/CFT$_4$, the interacting sector is $SU(N)$, and at large~$N$ the free $U(1)$ decouples from the leading entropy.  At $N=3$ the distinction can matter at subleading order.  The black-hole existence bound~\eqref{eq:JRBH} used here is the $U(3)$ bound from~\cite{PZ}~eq.~(4.7), not the $SU(3)$ bound; all tail comparisons are therefore internally consistent within the $U(N)$ convention.  Genuine $SU(3)$ coefficients for selected examples where the large-$N$ coefficient vanishes are reported in Section~\ref{sec:U1-center}.

The logic of the computation is summarized schematically in Figure~\ref{fig:schematic}.

\begin{figure}[t]
\centering
\includegraphics[width=\textwidth]{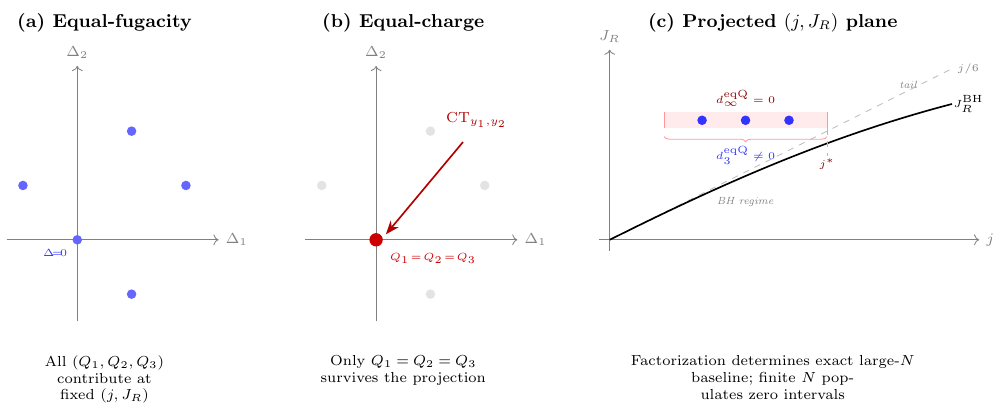}
\caption{Schematic of the equal-charge projection.  The equal-fugacity index sums over the full $A_2$ charge-difference lattice at fixed $(j,J_R)$, whereas the equal-charge projection keeps only the origin $Q_1=Q_2=Q_3$.  In the projected $(j,J_R)$ plane, the large-$N$ factorization gives exact intervals where the large-$N$ coefficient is zero; finite-rank coefficients can still be nonzero there, even beyond the black-hole bound.}
\label{fig:schematic}
\end{figure}

\subsection{The index and its equal-charge projection}
\label{sec:setup}

With all four fugacities retained, the refined index is~\cite{PZ}
\begin{equation}
  \mc I_N(x,p,y_1,y_2)
  =\operatorname{Tr}_{\mc H_{\rm BPS}}(-1)^F\,x^j\,p^{2J_R}\,
  y_1^{Q_1-Q_3}\,y_2^{Q_2-Q_3},
  \label{eq:Iref}
\end{equation}
\noindent where the trace runs over the Hilbert space of gauge-invariant BPS operators.  The refined index is computed by the same matrix integral~\eqref{eq:IN}, with $f_m$ replaced by the refined single-particle index $f_m^{\rm ref}$, evaluated at the $m$-th power of each fugacity:
\begin{equation}
  f_m^{\rm ref}(x,p,y_1,y_2)
  =1-\frac{(1-y_1^m x^{2m})(1-y_2^m x^{2m})
  (1-x^{2m}/(y_1 y_2)^m)}
  {(1-p^m x^{3m})(1-x^{3m}/p^m)}.
  \label{eq:fmref}
\end{equation}
Setting $y_1=y_2=1$ performs an equal-potential specialization, not an equal-charge projection.  A typical term in the refined index has the form
\[
  c\,x^j\,p^{2J_R}\,y_1^{\Delta_1}y_2^{\Delta_2},
  \qquad
  \Delta_i=Q_i-Q_3,
\]
where $c$ is a numerical coefficient.  When $y_1=y_2=1$, this contributes $c\,x^j p^{2J_R}$ regardless of the charge differences $(\Delta_1,\Delta_2)$.  Thus unequal-charge sectors are still summed when $y_1=y_2=1$.  The microcanonical equal-charge condition keeps only the terms with $\Delta_1=\Delta_2=0$:
\begin{equation}
  \mc I_N^{\rm eqQ}(x,p)
  :=\CT_{y_1,y_2}\,
  \mc I_N(x,p,y_1,y_2),
  \label{eq:IeqQ}
\end{equation}
\noindent where $\CT_{y_1,y_2}$ denotes the constant-term extraction in $(y_1,y_2)$: the coefficient of $y_1^0\,y_2^0$ in the Laurent expansion.

Each monomial in $f_m^{\rm ref}$ arises from expanding the denominator as a geometric series in powers $r,s\ge0$ and selecting one of the four numerator factors, labeled by $\ell\in\{0,1,2,3\}$.  The resulting charges are
\[
  j=2m\ell+3m(r+s),\qquad 2J_R=m(r-s).
\]
Since $j-6|J_R|=2m\ell+6m\min(r,s)\ge0$:
\begin{equation}
  |J_R|\le\frac{j}{6}.
  \label{eq:kinematic}
\end{equation}
This bound follows directly from the charge relations above.  The bound $J_R\le j/6$ is called kinematic because it follows from the letter charges before any dynamics, gauge integration, or choice of rank enters.  Equivalently, it follows from $j=2m\ell+3m(r+s)$ and $2J_R=m(r-s)$ alone.  We record it explicitly because, after the equal-charge projection, the index becomes $N$-independent on $J_R=j/6$ (Proposition~\ref{prop:frontier}).

The large-$N$ limit of the index factorizes into single-particle contributions~\cite{KMMR}.  In the giant graviton expansion~\cite{GE,Imamura,LR,Lee-GGE}, this multigraviton sector is the zeroth-order term, and finite-$N$ corrections are organized by wrapped-brane contributions.  The giant graviton itself was introduced in~\cite{MST,GMT,HHI}.  The multigraviton index is the plethystic exponential~\cite{BFHH} of the single-particle index.  For a formal Laurent series $g$ with zero constant term,
\[
  \mathrm{PE}[g]
  :=
  \exp\!\left(
  \sum_{m\ge1}\frac{1}{m}
  g(x^m,p^m,y_1^m,y_2^m)
  \right),
\]
so in particular $\mathrm{PE}[q/(1{-}q)]=\prod_{k\ge1}(1-q^k)^{-1}$ and $\mathrm{PE}[-q/(1{-}q)]=\prod_{k\ge1}(1-q^k)$.  Writing $a=y_1 x^2$, $b=y_2 x^2$, $c=x^2/(y_1 y_2)$, $d=px^3$, $e=p^{-1}x^3$, the large-$N$ single-particle index is
\[
  i_{\rm sp}
  =\frac{a}{1{-}a}+\frac{b}{1{-}b}+\frac{c}{1{-}c}
  -\frac{d}{1{-}d}-\frac{e}{1{-}e}.
\]
Applying $\mathrm{PE}$ term by term gives
\begin{equation}
  \mc I_\infty(x,p,y_1,y_2)
  =\prod_{k=1}^\infty
  \frac{(1-p^k x^{3k})(1-x^{3k}/p^k)}
  {(1-y_1^k x^{2k})(1-y_2^k x^{2k})
  (1-x^{2k}/(y_1 y_2)^k)}.
  \label{eq:Iinf_full}
\end{equation}
At equal fugacity ($y_1=y_2=1$):
\begin{equation}
  \mc I_\infty(x,p)=\prod_{k\ge1}
  \frac{(1-p^k x^{3k})(1-x^{3k}/p^k)}{(1-x^{2k})^3}.
  \label{eq:Iinf_eqfug}
\end{equation}

\subsection{The factorization theorem}

This subsection states the large-$N$ factorization and explains why it matters for the rest of the paper.  The formula separates the equal-charge multigraviton index into a prefactor depending on the angular fugacity $p$ and a partition series.  In Section~\ref{sec:zero-intervals}, the fact that this prefactor is nonzero only at pentagonal exponents will give explicit onset energies and intervals where the large-$N$ coefficient vanishes.

\begin{theorem}[Equal-charge multigraviton factorization]
\label{thm:main}
\begin{equation}
  \mc I_\infty^{\rm eqQ}(x,p)
  =\Pi(x,p)\cdot\sum_{n=0}^\infty \pfn(n)^3\,x^{6n},
  \label{eq:factorization}
\end{equation}
\noindent where $\pfn(n)$ is the integer partition function and
\begin{equation}
  \Pi(x,p):=\prod_{k=1}^\infty
  (1-p^k x^{3k})(1-p^{-k}x^{3k}).
  \label{eq:Pi}
\end{equation}
\end{theorem}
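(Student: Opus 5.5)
The plan is to start from the product formula \eqref{eq:Iinf_full} and observe that the numerator $\Pi(x,p)=\prod_k(1-p^kx^{3k})(1-x^{3k}/p^k)$ does not depend on $y_1,y_2$. It can therefore be pulled out of the constant-term extraction, which leaves
\[
  \mc I_\infty^{\rm eqQ}(x,p)=\Pi(x,p)\cdot \CT_{y_1,y_2}\prod_{k\ge1}\frac{1}{(1-y_1^kx^{2k})(1-y_2^kx^{2k})(1-x^{2k}(y_1y_2)^{-k})}.
\]
To justify this I will work in the ring of formal power series in $x$ whose coefficients are Laurent polynomials in $(y_1,y_2,p)$. At each fixed power of $x$ only finitely many monomials occur, so constant-term extraction is well defined and linear over series in $x,p$ alone. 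The whole statement therefore reduces to computing the constant term of the remaining $y$-dependent product.

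\textbf{Step 1: expand the three partition generators.} Write
\[
  \prod_k(1-a^k)^{-1}=\sum_{n_1}\pfn(n_1)a^{n_1},
\]
with $a=y_1x^2$, $b=y_2x^2$ and $c=x^2/(y_1y_2)$, and do the same for $b$ and $c$ with indices $n_2,n_3$. A generic term of the triple product is then
\[
  \pfn(n_1)\pfn(n_2)\pfn(n_3)\,x^{2(n_1+n_2+n_3)}\,y_1^{\,n_1-n_3}\,y_2^{\,n_2-n_3}.
\]

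\textbf{Step 2: take the constant term.} The coefficient of $y_1^0y_2^0$ forces $n_1=n_3$ and $n_2=n_3$, so $n_1=n_2=n_3=n$. Each surviving term is $\pfn(n)^3x^{6n}$, and summing over $n$ gives exactly $\sum_n\pfn(n)^3x^{6n}$. Multiplying back by $\Pi$ yields \eqref{eq:factorization}.

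\textbf{Main obstacle.} The only real issue is the formal bookkeeping: pulling $\Pi$ out of $\CT$, and using the product form \eqref{eq:Iinf_full} rather than the matrix integral. I expect this to be the main point requiring care, though it is mild. I would handle it by noting that for each fixed power of $x$, both $\Pi$ and the $y$-dependent product contribute only finitely many terms, and both sides are genuinely polynomial in $y_1^{\pm1},y_2^{\pm1},p^{\pm1}$. The identity $\CT(F\,G)=F\,\CT(G)$ for $y$-independent $F$ is then immediate coefficientwise.

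I would also remark that \eqref{eq:Iinf_full} has already been derived from the plethystic exponential, with each factor $\mathrm{PE}[a/(1-a)]=\prod_k(1-a^k)^{-1}$, so no further input is needed. As a sanity check, the same computation with a $y$-weight retained, i.e.\ extracting the coefficient of $y_1^{\Delta_1}y_2^{\Delta_2}$ instead of the constant term, gives the charge-resolved refinement with $\pfn(n_3+\Delta_1)\pfn(n_3+\Delta_2)\pfn(n_3)$.
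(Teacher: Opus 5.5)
Your proposal is correct and follows essentially the same route as the paper's proof: you pull the $y$-independent numerator $\Pi(x,p)$ out of the constant term, expand the three partition generators, and impose $n_1=n_2=n_3$ to obtain $\sum_n\pfn(n)^3x^{6n}$. Your coefficientwise justification in formal power series in $x$ (Laurent polynomials in $p,y_1,y_2$ at each order) matches the paper's remark that the identity holds in $\mathbb{Z}[p,p^{-1}][\![x]\!]$.
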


\noindent
The role of this formula is not only to compute $\mc I_\infty^{\rm eqQ}(x,p)$: it gives an exact large-$N$ multigraviton index against which the finite-$N$ calculation is compared.  For each $J_R$, the pentagonal onset $j^*(J_R)$ bounds an interval of energies in which no multigraviton contribution can occur.  Any nonzero finite-$N$ coefficient in that interval is therefore a finite-rank contribution not present in the large-$N$ index.  Figure~\ref{fig:beyond-BH} plots such coefficients using two coordinates defined in Section~\ref{sec:zero-intervals}.

\begin{proof}
The numerator~$\Pi(x,p)$ of~\eqref{eq:Iinf_full}
depends only on $x$ and $p$, since the charge-difference fugacities $y_1,y_2$ appear only in the denominator.  It therefore factors out of
the constant-term extraction:
\begin{equation}
  \mc I_\infty^{\rm eqQ}(x,p)
  =\Pi(x,p)\cdot\CT_{y_1,y_2}\!
  \left[\prod_{k\ge1}
  \frac{1}{(1-y_1^k x^{2k})
  (1-y_2^k x^{2k})
  (1-(y_1 y_2)^{-k}x^{2k})}\right].
  \label{eq:step1}
\end{equation}
Each denominator factor is a partition generating function.  Writing
\[
  \mathrm{Part}(q):=\prod_{k\ge1}(1-q^k)^{-1}=\sum_{n\ge0}\pfn(n)\,q^n,
\]
the constant-term factor in~\eqref{eq:step1} is
\[
  \CT_{y_1,y_2}\,
  \mathrm{Part}(y_1 x^2)\,
  \mathrm{Part}(y_2 x^2)\,
  \mathrm{Part}\!\left(\frac{x^2}{y_1 y_2}\right).
\]
Expanding the three partition generators and extracting the coefficient of $y_1^0 y_2^0$ gives the diagonal identity:
\begin{equation}
\label{eq:diagonal}
\begin{aligned}
&\CT_{y_1,y_2}
\prod_{k\ge1}
  \frac{1}{(1-y_1^k x^{2k})(1-y_2^k x^{2k})
  (1-(y_1 y_2)^{-k}x^{2k})}
\\
&=
\CT_{y_1,y_2}
\sum_{a,b,c\ge0}
\pfn(a)\pfn(b)\pfn(c)\,
x^{2(a+b+c)}\,
y_1^{a-c}\,y_2^{b-c}
\\
&=
\sum_{n\ge0}\pfn(n)^3\,x^{6n},
\end{aligned}
\end{equation}
where the last step uses $a-c=b-c=0$, i.e.\ $a=b=c=:n$.
\end{proof}

\noindent
All infinite-product manipulations in the proof are justified coefficientwise in the formal $x$-adic expansion; the identity~\eqref{eq:factorization} holds in the ring of formal power series $\mathbb{Z}[p,p^{-1}][\![x]\!]$.

\medskip\noindent\textit{Origin of the partition-function cube.}
The cube arises because there are three $R$-charge
fugacities $(y_1,y_2,(y_1 y_2)^{-1})$ subject to
two constant-term constraints, leaving one diagonal
index: the three partition sizes associated with the three charge directions must be equal.  This is the origin of both the cube $\pfn(n)^3$ and the energy grading $x^{6n}$.  For a denominator of the same product-of-geometric-series structure
with $k$ independent charge fugacities and $k{-}1$ constraints,
the analogous formula would give $\pfn(n)^k$.

\begin{remark}[Symmetric form]
\label{rem:qqbar}
In the variables $q=px^3$, $\bar q=x^3/p$ used on the kinematic boundary below,
Theorem~\ref{thm:main} reads
\[
  \mc I^{\rm eqQ}_\infty
  =(q;q)_\infty\,(\bar q;\bar q)_\infty\,\Phi(q\bar q),
  \qquad
  \Phi(t):=\sum_{n\ge0}\pfn(n)^3\,t^n,
\]
with $(t;t)_\infty:=\prod_{k\ge1}(1-t^k)$, making the $p\leftrightarrow p^{-1}$
symmetry of the index manifest.
\end{remark}

The diagonal extraction in the proof of Theorem~\ref{thm:main} works equally
well with the constant term replaced by an arbitrary charge sector, by the same calculation; this gives the exact large-$N$ multigraviton index for the full
charge-difference profile, not only for its equal-charge slice.

\begin{proposition}[Charge-resolved factorization]
\label{prop:charge-resolved}
For $\Delta_1,\Delta_2\in\mathbb Z$,
\begin{equation}
  [y_1^{\Delta_1}y_2^{\Delta_2}]\;\mc I_\infty(x,p,y_1,y_2)
  \;=\;\Pi(x,p)\;x^{2(\Delta_1+\Delta_2)}
  \sum_{c\,\ge\,\max(0,-\Delta_1,-\Delta_2)}
  \pfn(c)\,\pfn(c{+}\Delta_1)\,\pfn(c{+}\Delta_2)\;x^{6c}.
  \label{eq:charge-resolved}
\end{equation}
Theorem~\ref{thm:main} is the case $\Delta_1=\Delta_2=0$.
\end{proposition}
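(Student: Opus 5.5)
We prove Proposition~\ref{prop:charge-resolved} by rerunning the argument of Theorem~\ref{thm:main}, replacing the constant term with the coefficient of $y_1^{\Delta_1}y_2^{\Delta_2}$. Start from the product formula~\eqref{eq:Iinf_full}. The numerator $\Pi(x,p)$ does not involve $y_1$ or $y_2$. So, as in~\eqref{eq:step1}, it factors out of any coefficient extraction in $(y_1,y_2)$. This extraction is a $\mathbb Z[p,p^{-1}]$-linear map on $\mathbb{Z}[p,p^{-1},y_1^{\pm1},y_2^{\pm1}][\![x]\!]$. It is well defined coefficientwise in $x$, because each power $x^j$ carries only finitely many Laurent monomials in $(y_1,y_2)$. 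We are therefore left with computing
\[
  [y_1^{\Delta_1}y_2^{\Delta_2}]\;
  \mathrm{Part}(y_1x^2)\,\mathrm{Part}(y_2x^2)\,\mathrm{Part}\!\left(\frac{x^2}{y_1y_2}\right).
\]

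Next, expand as in~\eqref{eq:diagonal}:
\[
  \mathrm{Part}(y_1x^2)\,\mathrm{Part}(y_2x^2)\,\mathrm{Part}\!\left(\frac{x^2}{y_1y_2}\right)
  =\sum_{a,b,c\ge0}\pfn(a)\pfn(b)\pfn(c)\,x^{2(a+b+c)}y_1^{a-c}y_2^{b-c}.
\]
The coefficient of $y_1^{\Delta_1}y_2^{\Delta_2}$ picks out the terms with $a=c+\Delta_1$ and $b=c+\Delta_2$. Two facts then finish the computation:
\begin{itemize}
\item The constraints $a,b,c\ge0$ become $c\ge\max(0,-\Delta_1,-\Delta_2)$.
\item The exponent is $2(a+b+c)=6c+2(\Delta_1+\Delta_2)$, so $x^{2(\Delta_1+\Delta_2)}$ factors out.
\end{itemize}
Collecting terms gives exactly~\eqref{eq:charge-resolved}. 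Setting $\Delta_1=\Delta_2=0$ gives back Theorem~\ref{thm:main}.

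The argument is routine. The only point needing care, and the one I would state explicitly, is the bookkeeping of the lower summation limit together with the formal-convergence justification. For each fixed power of $x$, the inner sum over $c$ is finite, so the identity holds in $\mathbb Z[p,p^{-1}][\![x]\!]$ exactly as remarked after Theorem~\ref{thm:main}.
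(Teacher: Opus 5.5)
Your proposal is correct and follows the paper's own proof: you factor $\Pi(x,p)$ out of the coefficient extraction, impose $a=c+\Delta_1$, $b=c+\Delta_2$ in the diagonal expansion~\eqref{eq:diagonal}, and read off the lower limit $c\ge\max(0,-\Delta_1,-\Delta_2)$ and the exponent $6c+2(\Delta_1+\Delta_2)$. Your remark that the identity holds coefficientwise in $\mathbb Z[p,p^{-1}][\![x]\!]$, because each power of $x$ involves only finitely many terms, matches the justification the paper gives after Theorem~\ref{thm:main}.
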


\begin{proof}
Identical to the proof of Theorem~\ref{thm:main}: in the diagonal
identity~\eqref{eq:diagonal}, extract the coefficient of
$y_1^{\Delta_1}y_2^{\Delta_2}$ instead of the constant term.  The conditions
$a-c=\Delta_1$, $b-c=\Delta_2$ give $a=c+\Delta_1$, $b=c+\Delta_2$ with
$c\ge\max(0,-\Delta_1,-\Delta_2)$, and
$x^{2(a+b+c)}=x^{6c+2(\Delta_1+\Delta_2)}$.
\end{proof}

\noindent
Two structural facts are immediate from~\eqref{eq:charge-resolved} and are
used later.  First, the right-hand side depends on $(\Delta_1,\Delta_2)$ only
through the multiset $\{0,\Delta_1,\Delta_2\}$ up to a common shift: the
substitution $c\to c-1$ identifies, for example, the
$(\Delta_1,\Delta_2)=(1,-1)$ and $(2,1)$ profiles.  This is the $S_3$ permutation symmetry of the three $R$-charge directions $(y_1,y_2,(y_1y_2)^{-1})$, not a symmetry of the gauge-group roots.  The same charge-permutation symmetry holds
at finite $N$, where it acts on the three refined letter directions in
$f_m^{\rm ref}$; in
Appendix~\ref{app:computational} the equality of the $(1,-1)$ and
$(2,1)$ charge-resolved coefficients is therefore used as a consistency check.

Second, every
monomial of~\eqref{eq:charge-resolved} obeys
$j-6J_R-2(\Delta_1+\Delta_2)\in6\mathbb Z$, which is the support congruence
established for all finite $N$ in Appendix~\ref{app:computational}.  At fixed
$(j,J_R)$ the allowed $(\Delta_1,\Delta_2)$ therefore form a coset of the
index-three sublattice $\{\Delta_1+\Delta_2\in3\mathbb Z\}$ of the
charge-difference lattice; writing the shifted multiset as
$(-\delta,\Delta_1{-}\delta,\Delta_2{-}\delta)$ with
$\delta=(\Delta_1+\Delta_2)/3$ identifies this sublattice with the triangular
$A_2$ root lattice $\{(a_1,a_2,a_3)\in\mathbb Z^3:a_1{+}a_2{+}a_3=0\}$.  This
is the $A_2$ lattice of Figure~\ref{fig:schematic} and the support sublattice
of Section~\ref{sec:BH-physics}.  A
quantitative consequence for the equal-charge suppression factor is derived in
Section~\ref{sec:BH-physics}.

\begin{corollary}[The $p=1$ specialization]
\label{cor:p1}
At $p=1$, the prefactor specializes to $\Pi(x,1)=\prod_{k\ge1}(1-x^{3k})^2$, so the projected total
\begin{equation}
  D_\infty(j)
  :=[x^j]\,\mc I_\infty^{\rm eqQ}(x,1)
  =[x^j]\left[
  \prod_{k\ge1}(1-x^{3k})^2
  \cdot\sum_{n\ge0}\pfn(n)^3\,x^{6n}\right].
  \label{eq:Dinf}
\end{equation}
Setting $p=1$ gives $D_\infty(j)=\sum_{J_R}d_\infty^{\rm eqQ}(j,J_R)$, the sum running over all kinematically allowed~$J_R$; by the $p\leftrightarrow p^{-1}$ symmetry of the index~\cite{KMMR} this equals $d_\infty^{\rm eqQ}(j,0)+2\sum_{J_R>0}d_\infty^{\rm eqQ}(j,J_R)$.
\end{corollary}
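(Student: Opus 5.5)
The plan is to obtain Corollary~\ref{cor:p1} by specializing Theorem~\ref{thm:main} at $p=1$, with one preliminary check: that this specialization commutes with the constant-term extraction and with taking coefficients. Theorem~\ref{thm:main} holds in $\mathbb Z[p,p^{-1}][\![x]\!]$. For each fixed power $x^j$, the coefficient is a Laurent polynomial in $p$, because the kinematic bound~\eqref{eq:kinematic} restricts it to the finitely many spins $|J_R|\le j/6$. Evaluation at $p=1$ is therefore a well-defined ring homomorphism $\mathbb Z[p,p^{-1}][\![x]\!]\to\mathbb Z[\![x]\!]$, applied coefficientwise.

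Applying this homomorphism to~\eqref{eq:factorization}, the prefactor~\eqref{eq:Pi} becomes $\prod_k(1-x^{3k})(1-x^{3k})=\prod_k(1-x^{3k})^2$. This step is legitimate because each finite truncation of the product maps correctly. Moreover, the factors with $3k>j$ do not affect $[x^j]$, so the infinite product specializes correctly at every order in $x$. The partition series $\sum_n\pfn(n)^3x^{6n}$ does not depend on $p$. Taking $[x^j]$ of the product then gives~\eqref{eq:Dinf}.

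For the identification $D_\infty(j)=\sum_{J_R}d_\infty^{\rm eqQ}(j,J_R)$, I write $\mc I_\infty^{\rm eqQ}=\sum_{j,J_R}d_\infty^{\rm eqQ}(j,J_R)x^jp^{2J_R}$ and set $p=1$; the sum over $J_R$ is finite by~\eqref{eq:kinematic}. For the symmetric form, I note that $\Pi(x,p)$ is visibly invariant under $p\leftrightarrow p^{-1}$, since the two factors in each term are exchanged (this is the content of Remark~\ref{rem:qqbar}). The partition factor is independent of $p$. Hence $d_\infty^{\rm eqQ}(j,J_R)=d_\infty^{\rm eqQ}(j,-J_R)$ directly from the theorem, so no appeal to~\cite{KMMR} is needed, and splitting the sum into $J_R=0$ plus the pairs $\pm J_R$ gives the stated form.

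I do not expect any real obstacle. The only point that needs care is justifying that evaluation at $p=1$ is compatible with both the infinite product and the constant term. This is handled by the finiteness of the $p$-support of each $x^j$ coefficient noted above.
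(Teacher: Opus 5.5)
Your proposal is correct and matches the paper's argument, which is simply the $p=1$ specialization of Theorem~\ref{thm:main} in $\mathbb Z[p,p^{-1}][\![x]\!]$, followed by summing over the finitely many spins $|J_R|\le j/6$. The only difference is that you read off the $p\leftrightarrow p^{-1}$ symmetry directly from the manifestly symmetric prefactor $\Pi(x,p)$ (as in Remark~\ref{rem:qqbar}) instead of citing~\cite{KMMR}, which makes that step self-contained; your worry about commuting the specialization with the constant term is unnecessary, since $D_\infty(j)$ is defined by specializing the already-projected series.
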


\begin{proposition}[Kinematic-boundary stability]
\label{prop:frontier}
Let $q=x^3p$.  Define the generalized pentagonal numbers
\[
  \omega(r):=\frac{r(3r-1)}{2},\qquad r\in\mathbb Z,
\]
and define $E(n)$ by Euler's pentagonal theorem~\cite{Andrews},
\[
  (q;q)_\infty
  :=\prod_{k\ge1}(1-q^k)
  =\sum_{r\in\mathbb Z}(-1)^r\,q^{\omega(r)}
  =:\sum_{n\ge0}E(n)\,q^n.
\]
Equivalently, $E(n)=0$ unless $n=\omega(r)$ is a generalized pentagonal number, in which case $E(n)=(-1)^r$.
On the positive kinematic boundary $j=6J_R$, $J_R\ge0$, the
equal-charge-projected finite-$N$ index is independent of $N$:
\begin{equation}
  \sum_{n\ge0} d_N^{\rm eqQ}(3n,n/2)\,q^n
  =
  (q;q)_\infty.
  \label{eq:frontier}
\end{equation}
In particular, for
integer $J_R=J$,
\begin{equation}
  d_N^{\rm eqQ}(6J,J)=E(2J),
  \qquad
  G_N^{\rm eqQ}(6J,J)=0.
  \label{eq:frontier-integer}
\end{equation}
Thus finite-$N$ giant-graviton corrections do not reach the kinematic boundary.  The negative boundary $J_R\le0$ follows by $p\leftrightarrow p^{-1}$.
\end{proposition}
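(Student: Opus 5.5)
The plan is to show that on the line $j=6J_R$ only a single, charge-neutral family of letters survives. Once that is established, the constant-term projection is trivial and the remaining $U(N)$ matrix integral can be evaluated in closed form by a Macdonald/$q$-Dyson constant-term identity.

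\emph{Step 1 (restriction to the boundary).} Every monomial in the refined integrand has $j-6J_R\ge0$. For the letters this follows from $j-6|J_R|=2m\ell+6m\min(r,s)$. The Haar measure, the $S_m(\theta)$ factors and the fugacities $y_i$ carry no $(j,J_R)$ charge. Since $j-6J_R$ is additive under products, the coefficients with $j=6J_R$ come only from products of monomials that each sit on the boundary. I would therefore replace $f_m^{\rm ref}$ by its boundary part. For $J_R\ge0$ this part has $\ell=0$ and $s=0$, so the numerator factor is $1$. The $r=s=0$ term $-1$ cancels the leading $1$ in $f_m^{\rm ref}$. What remains is
\[
f_m^{\rm bdy}=-\sum_{r\ge1}q^{mr}=-\frac{q^m}{1-q^m},\qquad q=px^3 .
\]
This is independent of $y_1,y_2$, so $\CT_{y_1,y_2}$ acts trivially. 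Every power $q^n$ has $j=3n$ and $J_R=n/2$, which accounts for both the half-integer and the integer sectors.

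\emph{Step 2 (writing the integral as a constant term).} With this single letter, the plethystic exponent $\sum_m \frac{1}{m}f_m^{\rm bdy}\,\mathrm{tr}U^m\,\mathrm{tr}U^{-m}$ exponentiates to $\prod_{k\ge1}\prod_{a,b}(1-q^kz_a/z_b)$, where $z_a=e^{i\theta_a}$. The diagonal terms $a=b$ give $(q;q)_\infty^N$. Combining the Vandermonde factor with the off-diagonal terms, the left side of \eqref{eq:frontier} becomes
\[
\frac{(q;q)_\infty^N}{N!}\,\CT_z\prod_{a\neq b}(z_a/z_b;q)_\infty .
\]
Here I use the $U(N)$ normalization, integrating over all $N$ angles; the overall $U(1)$ is harmless because the integrand depends only on ratios. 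Next I would split $\prod_{a\ne b}(z_a/z_b;q)_\infty=\prod_{a<b}(1-z_b/z_a)\,(z_a/z_b;q)_\infty(qz_b/z_a;q)_\infty$. The remaining product is symmetric in $z$, so the standard antisymmetrization argument converts the $1/N!$ together with $\prod_{a<b}(1-z_b/z_a)$ into a bare constant term. This gives
\[
(q;q)_\infty^N\,\CT_z\prod_{a<b}(z_a/z_b;q)_\infty(qz_b/z_a;q)_\infty .
\]

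\emph{Step 3 (main step).} Here I would invoke the equal-parameter $q$-Dyson (Macdonald $A_{N-1}$) constant-term identity:
\[
\CT\prod_{a<b}(z_a/z_b;q)_k(qz_b/z_a;q)_k=\frac{(q;q)_{Nk}}{(q;q)_k^N}.
\]
Letting $k\to\infty$ coefficientwise in $q$ gives $(q;q)_\infty^{1-N}$. This limit is legitimate because at each fixed order in $q$ only finitely many factors matter. The total is then $(q;q)_\infty^N\cdot(q;q)_\infty^{1-N}=(q;q)_\infty$, independent of $N$, which is \eqref{eq:frontier}.

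For integer $J_R=J$ one reads off the coefficient of $q^{2J}$, giving $E(2J)$ by Euler's theorem. Comparing with $N\to\infty$ then gives $G_N^{\rm eqQ}=0$. As a check, Theorem~\ref{thm:main} at $x^{6J}p^{2J}$ reproduces this: only the $n=0$ term of the partition series and the $(q;q)_\infty$ factor of $\Pi$ can contribute. The negative boundary follows from $p\leftrightarrow p^{-1}$. I expect the main obstacle to be the bookkeeping in Step 3: handling the asymmetric split, the $1/N!$, and the $k\to\infty$ limit carefully. If that route becomes awkward, the fallback is to check the $N=1,2$ cases directly and argue by the fermionic Fock-space interpretation.
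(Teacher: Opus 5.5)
Your route is the paper's: the additive defect $j-6J_R=2m\ell+6ms$ forces $\ell=s=0$, which removes every $y$-charged letter and leaves $f_m^\partial=-q^m/(1-q^m)$. The boundary series then reduces to $(q;q)_\infty^N\,\frac1{N!}\CT\prod_{a\ne b}(z_a/z_b;q)_\infty$, evaluated by Macdonald/$q$-Dyson. The paper quotes the symmetric form $\frac1{N!}\CT\prod_{a\ne b}(z_a/z_b;q)_\infty=(q;q)_\infty^{1-N}$ directly. Your detour through the Zeilberger--Bressoud form is fine in principle, but its justification in Step 2 is wrong as written. The product $F=\prod_{a<b}(z_a/z_b;q)_\infty(qz_b/z_a;q)_\infty$ is \emph{not} symmetric: its $q^0$ term is $\prod_{a<b}(1-z_a/z_b)$, and for $N=2$ the swap $z_1\leftrightarrow z_2$ multiplies it by $-z_2/z_1$. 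Even for a genuinely symmetric $F$ the inference would fail, since $\frac1{N!}\CT\bigl[\prod_{a<b}(1-z_b/z_a)\,F\bigr]\neq\CT F$ in general (take $F=1$: the left side is $1/N!$).

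The correct statement is $F=\prod_{a<b}(1-z_a/z_b)\cdot g$ with $g=\prod_{a\ne b}(qz_a/z_b;q)_\infty$ symmetric. The full integrand is then $\prod_{a\ne b}(1-z_a/z_b)\,g$. The standard antisymmetrization lemma for symmetric $g$,
\[
\frac1{N!}\CT\Bigl[\prod_{a\ne b}(1-z_a/z_b)\,g\Bigr]=\CT\Bigl[\prod_{a<b}(1-z_a/z_b)\,g\Bigr],
\]
gives exactly $\CT F$. So your displayed expression, and the rest of the argument, survive once you replace ``$F$ is symmetric'' by ``$F$ is the half-Vandermonde times a symmetric function.''

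This is also where the bookkeeping you worried about genuinely matters. At finite $k$ the same lemma turns $\frac1{N!}\CT\prod_{a\ne b}(z_a/z_b;q)_k$ into $\CT\prod_{a<b}(z_a/z_b;q)_k(qz_b/z_a;q)_{k-1}$, not the equal-parameter $q$-Dyson product. For $N=2$, $k=1$ these give $1$ and $1+q$ respectively. The identification therefore holds only after setting $k=\infty$, which is what you do, with $k\to\infty$ taken only on the $q$-Dyson side, where it is harmless coefficientwise in $q$.
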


\begin{proof}
We work on the positive-spin boundary $J_R\ge0$; the opposite boundary is related by $p\leftrightarrow p^{-1}$.  Write $P=2J_R$.
For a single monomial in the $m$-th single-letter contribution, obtained by choosing denominator powers $r,s\ge0$ and a numerator degree $\ell\in\{0,1,2,3\}$, one has
\[
  j=2m\ell+3m(r+s),\qquad P=m(r-s).
\]
Hence the boundary defect of each monomial is
\[
  j-3P=2m\ell+6ms\ge0.
\]
This defect is additive under multiplication of letters.  Therefore a product contributing to a boundary coefficient with $j=3P$, equivalently $j=6J_R$, can contain only monomials with zero defect.  Zero defect forces $\ell=0$ and $s=0$: all $x^{2m}$-numerator letters and all $p^{-m}x^{3m}$-denominator letters are excluded on the boundary.  In particular every surviving letter is neutral in $(y_1,y_2)$, so the equal-charge projection acts trivially on the boundary and the computation below applies verbatim to the projected index $\mc I_N^{\rm eqQ}$.  The surviving single-letter boundary part is
\begin{equation}
  f_m^\partial(q)
  =
  1-\frac{1}{1-q^m}
  =
  -\frac{q^m}{1-q^m}.
\end{equation}
Writing $z_i$ for the $U(N)$ eigenvalues, the kinematic boundary contribution is
\begin{equation}
  \mc I_N^\partial(q)
  =
  \frac{1}{N!}\CT_{\mathbf z}
  \prod_{i\ne j}(1-z_i/z_j)
  \prod_{r\ge1}\prod_{i,j}(1-q^r z_i/z_j).
\end{equation}
Using
\[
  \prod_{r\ge1}\prod_{i,j}(1-q^r z_i/z_j)
  =(q;q)_\infty^N
  \prod_{i\ne j}(qz_i/z_j;q)_\infty,
\]
we obtain
\begin{equation}
  \mc I_N^\partial(q)
  =
  (q;q)_\infty^N\,
  \frac{1}{N!}\CT_{\mathbf z}
  \prod_{i\ne j}(z_i/z_j;q)_\infty.
\end{equation}
The $q$-Dyson/Macdonald constant-term identity~\cite{Macdonald,ZB} gives
\begin{equation}
  \frac{1}{N!}\CT_{\mathbf z}
  \prod_{i\ne j}(z_i/z_j;q)_\infty
  =
  (q;q)_\infty^{1-N}.
\end{equation}
For background on the Selberg/Morris constant-term circle and its
random-matrix connections, see~\cite{FW-Selberg,Forrester-book}.
Therefore $\mc I_N^\partial(q)=(q;q)_\infty$.
\end{proof}

The boundary computation shows that no finite-rank correction reaches the
kinematic ceiling $j=6J_R$.  The support congruence
$j-6J_R\in6\mathbb Z$ then makes the next possible line
\[
  j=6J_R+6.
\]
This first adjacent line is a useful test of how far the boundary stability
persists into the interior.  The answer is almost completely stable: for
$N\ge4$ the coefficient already equals the large-$N$ multigraviton value.
Rank three is the first exceptional case, and the exception is controlled by
the first non-stable cubic constant-term identity.  This gives a closed formula for
several of the high-spin examples used later, including the deeply delayed
coefficient $(j,J_R)=(87,\tfrac{27}{2})$.

\begin{proposition}[The line $j=6J_R+6$]
\label{prop:strip}
Let $P=2J_R\in\mathbb Z_{\ge0}$, and set $E(n)=0$ for $n<0$. On the line
\[
j=3P+6=6J_R+6
\]
one has
\[
d^{\rm eqQ}_3(3P+6,P/2)
=
E(P)-E(P+1)+E(P-1)-E(P-2),
\]
whereas, for every $N\ge4$,
\[
d^{\rm eqQ}_N(3P+6,P/2)
=
E(P)-E(P+1)
=
d^{\rm eqQ}_\infty(3P+6,P/2).
\]
Equivalently,
\[
G^{\rm eqQ}_3(3P+6,P/2)
=
E(P-1)-E(P-2),
\qquad
G^{\rm eqQ}_N(3P+6,P/2)=0
\quad (N\ge4).
\]
The negative-spin line follows by the symmetry $p\leftrightarrow p^{-1}$.
\end{proposition}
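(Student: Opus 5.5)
The plan is to reuse the defect argument from the proof of Proposition~\ref{prop:frontier} and treat the line $j=3P+6$ as the first-order correction to the boundary computation. For a monomial coming from the $m$-th single-letter term, the defect is $j-3P=2m\ell+6ms$. Defects are additive and $j-3P=6$ on this line, so every product contributing to the coefficient contains exactly one monomial of defect $6$, and all its other monomials are boundary letters. A defect-$6$ monomial arises in only three ways:
\begin{itemize}
\item $m=1,\ell=3$: this is the monomial $abc=x^6$, which is neutral in $(y_1,y_2)$.
\item $m=3,\ell=1$: this carries charge $y_i^{3}$ or $(y_1y_2)^{-3}$, and the boundary letters are neutral, so the equal-charge projection kills it.
\item $m=1,s=1$: this is one power of $x^3/p$, which is again neutral.
\end{itemize}
So after $\CT_{y_1,y_2}$, the projected line coefficient is obtained by perturbing only $f_1$. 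Introduce a bookkeeping variable $\varepsilon$ for defect $6$. In the boundary variable $q=px^3$, the letter $x^6$ becomes $\varepsilon q^0$ and $x^3/p$ becomes $\varepsilon q^{-1}$. Keeping track of the signs in $f_1=1-\mathrm{num}/\mathrm{den}$, together with the geometric tail $1/(1-q)$ of the surviving boundary denominator, I expect
\[
\delta f_1=\varepsilon\,\frac{1-q^{-1}}{1-q}=-\varepsilon\,q^{-1}
\]
(signs to be rechecked).

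Next I would expand the exponential in \eqref{eq:IN} to first order in $\varepsilon$. This gives
\[
\sum_{P} d^{\rm eqQ}_N(3P+6,P/2)\,q^P=-q^{-1}\,(q;q)_\infty^N\,\frac{1}{N!}\CT_{\mathbf z}\Big[\,|p_1(\mathbf z)|^2\prod_{i\ne j}(z_i/z_j;q)_\infty\Big],
\]
where $|p_1|^2=S_1=N+\sum_{i\neq j}z_i/z_j$. Dividing by the $q$-Dyson constant term already used in Proposition~\ref{prop:frontier}, the problem becomes computing the normalized average $A_N(q):=\langle |p_1|^2\rangle_N$ in the weight $\prod_{i\ne j}(z_i/z_j;q)_\infty$. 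This is the Macdonald weight $\prod(x;q)_\infty/(tx;q)_\infty$ at $t=0$, so the relevant orthogonal family is the $q$-Whittaker polynomials $P_\lambda(\mathbf z;q,0)$. Since $p_1=P_{(1)}$, the quantity $A_N(q)$ is the norm ratio $\langle P_{(1)},P_{(1)}\rangle'_N/\langle 1,1\rangle'_N$. Macdonald's constant-term norm formula, specialized to $t=0$ and $\lambda=(1)$, should give this explicitly as a finite ratio of $q$-Pochhammers depending on $N$.

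The last step is to match generating functions. The claimed large-$N$ answer satisfies
\[
\sum_P\bigl(E(P)-E(P+1)\bigr)q^P=(1-q^{-1})(q;q)_\infty+q^{-1},
\]
so I would check that the $N\ge4$ value of $A_N$ reproduces exactly this series in the range of degrees that can occur. The $N=3$ value should differ by a term producing $\sum_P(E(P-1)-E(P-2))q^P=q(1-q)(q;q)_\infty$.

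The main obstacle is twofold. First, I need the exact $t=0$ norm of $P_{(1)}$ and an explanation of why it stabilizes exactly for $N\ge4$ on this line rather than only as $N\to\infty$. A plausible mechanism is that the $N$-dependent factor contributes only at $q$-orders $\ge N-1$ (or similar), and the line structure truncates the relevant orders. Second, I must pin down the signs and the precise normalization of $\delta f_1$. If the Macdonald-norm route proves awkward, my fallback is a direct computation. I would write $|p_1|^2-N=\sum_{i\ne j}z_i/z_j$ and evaluate $\CT\,[z_1/z_2\prod(z_i/z_j;q)_\infty]$ using the known first-moment (Aomoto/Kadell-type) extensions of the $q$-Dyson identity. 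By symmetry these reduce to a single constant term times $N(N-1)$, which I would then compare with the claimed formulas, separately for $N=3$ and $N\ge4$.
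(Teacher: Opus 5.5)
You have a genuine gap in the first step. Your claim that every product contributing on the line $j=3P+6$ contains exactly one monomial of defect $6$, with all others boundary letters, is false. The total defect $6$ can also split as $2+4$ or $2+2+2$:
\begin{itemize}
\item the $m=1$ letters $y_1x^2$, $y_2x^2$, $(y_1y_2)^{-1}x^2$ have defect $2$ and enter $f_1$ with sign $+$;
\item the $m=1$, $\ell=2$ letters $y_1y_2x^4$, $y_1^{-1}x^4$, $y_2^{-1}x^4$ have defect $4$ and enter with sign $-$;
\item each is dressed by the tail $1/(1-q)$.
\end{itemize}
These letters are individually charged, but $\CT_{y_1,y_2}$ acts on the whole product. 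Products such as $(y_1x^2)(y_2x^2)\bigl((y_1y_2)^{-1}x^2\bigr)$ or $(y_1x^2)(y_1^{-1}x^4)$ are neutral. Your argument for $m=3,\ell=1$ (it dies because it is the only non-boundary letter) does not transfer to these splits. Only the $m=2,\ell=1$ defect-$4$ letters drop out, because $\CT(Y_1Y_2)=0$. Write $h=x^2$, $Y_1=y_1+y_2+(y_1y_2)^{-1}$, $Y_1^\vee=y_1y_2+y_1^{-1}+y_2^{-1}$, and use $\CT(Y_1Y_1^\vee)=3$, $\CT(Y_1^3)=6$. Then the neutral $h^3$ insertion is
\[
  \mc O_1=-q^{-1}S_1-\frac{3S_1^2}{(1-q)^2}+\frac{S_1^3}{(1-q)^3}.
\]
Your $-q^{-1}S_1$ term, with $\delta f_1=-\varepsilon q^{-1}$, is correct but is only the first of these three.

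This omission removes exactly the rank-three effect. Your $A_N(q)=\langle S_1\rangle_N$ has no $N$-dependence: the $t=0$ norm ratio for $P_{(1)}$ is $(q;q)_1=1-q$ for every $N\ge2$. So your route gives $(1-q^{-1})(q;q)_\infty$, i.e.\ $E(P)-E(P+1)$, for all $N\ge2$, and predicts $G_3^{\rm eqQ}=0$ on the line. That contradicts the statement at $P=27$, where $d_3^{\rm eqQ}=E(26)=1$ while $d_\infty^{\rm eqQ}=0$. The ``$N$-dependent factor at high $q$-order'' you hope for does not exist in $\langle S_1\rangle_N$.

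Your $N\ge4$ answer comes out right only because the missing terms cancel there, $-3\cdot2+6=0$. This uses the stable values $\langle S_1^2\rangle_N=2(1-q)^2$ for $N\ge3$ and $\langle S_1^3\rangle_N=6(1-q)^3$ for $N\ge4$. The correction $E(P-1)-E(P-2)$ comes from the cubic insertion being non-stable at $N=3$: $\langle S_1^3\rangle_3=(1-q)^3(6+q-q^2)$, which gives $\langle\mc O_1\rangle_3=1-q^{-1}+q-q^2$.

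Your $q$-Whittaker idea is the right tool, but it must be applied to $p_1^3$, not $p_1$. Expand
\[
p_1^3=P_{(3)}+(1-q)(2+q)P_{(2,1)}+(1-q)^2P_{(1,1,1)}.
\]
The normalized norms $\prod_{i<N}(q;q)_{\lambda_i-\lambda_{i+1}}$ agree for $N=3$ and $N\ge4$ except for $P_{(1^3)}=e_3$. Its norm is $1$ at $N=3$ but $1-q$ for $N\ge4$, which produces exactly the extra $q(1-q)^4$. The paper reaches the same values differently, using the stable power-sum scalar product for $r<N$ and a direct $A_2$ constant-term table at $N=3$.
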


\begin{proof}
We expand away from the positive kinematic boundary.  Set
\[
  q=px^3,\qquad h=x^2,\qquad P=2J_R.
\]
Then
\[
  x^j p^P=q^P h^{(j-3P)/2}.
\]
Thus the boundary line $j=3P=6J_R$ is the coefficient of $h^0$, while the
first adjacent line $j=3P+6=6J_R+6$ is the coefficient of $h^3$.

For the $m$-th single-letter contribution write
\[
  S_m=\sum_{a,b=1}^N (z_a/z_b)^m,
\]
and
\[
  Y_m=y_1^m+y_2^m+(y_1y_2)^{-m},
  \qquad
  Y_m^\vee=y_1^my_2^m+y_1^{-m}+y_2^{-m}.
\]
Expanding the refined single-letter index in powers of $h$ gives
\[
  f_m^{\rm ref}
  =
  -\frac{q^m}{1-q^m}
  +
  \frac{h^mY_m-h^{2m}Y_m^\vee}{1-q^m}
  -
  h^{3m}q^{-m}
  +O(h^{4m}).
\]
The first term is exactly the boundary term used in
Proposition~\ref{prop:frontier}; the remaining terms measure the defect away
from the boundary.

Factoring off the boundary contribution turns the matrix integral into the
same normalized constant-term expectation as before.  Namely, with
\[
  Z_N(q)=
  \frac1{N!}\CT_{\mathbf z}
  \prod_{i\ne j}(z_i/z_j;q)_\infty
  =
  (q;q)_\infty^{1-N}
\]
and
\[
  \langle F\rangle_N
  =
  Z_N(q)^{-1}
  \frac1{N!}\CT_{\mathbf z}
  F
  \prod_{i\ne j}(z_i/z_j;q)_\infty ,
\]
the coefficient on the line $j=3P+6$ is obtained by multiplying
$(q;q)_\infty$ by the normalized expectation of the
$h^3y_1^0y_2^0$ defect insertion.

It remains to find this insertion.  We need the $h^3y_1^0y_2^0$ part of
\[
  \exp\left[
  \sum_{m\ge1}\frac{1}{m}
  \left(
  \frac{h^mY_m-h^{2m}Y_m^\vee}{1-q^m}
  -
  h^{3m}q^{-m}
  +O(h^{4m})
  \right)S_m
  \right].
\]
At order $h^3$, the $m=3$ term contains $Y_3$, whose constant term in
$y_1,y_2$ is zero.  The mixed $m=1,m=2$ term contains $Y_1Y_2$, whose
constant term is also zero.  Hence the only neutral contributions are
\[
  -h^3q^{-1}S_1,
\]
\[
  \left(\frac{hY_1S_1}{1-q}\right)
  \left(-\frac{h^2Y_1^\vee S_1}{1-q}\right),
\]
and
\[
  \frac{1}{6}
  \left(\frac{hY_1S_1}{1-q}\right)^3.
\]
Since
\[
  \CT_{y_1,y_2}(Y_1Y_1^\vee)=3,
  \qquad
  \CT_{y_1,y_2}(Y_1^3)=6,
\]
the equal-charge extraction leaves
\[
  \mathcal O_1
  =
  -q^{-1}S_1
  -
  \frac{3S_1^2}{(1-q)^2}
  +
  \frac{S_1^3}{(1-q)^3}.
\]

We now apply Lemma~\ref{lem:inserted-constant-terms}.  In the ranks needed here,
\[
  \langle S_1\rangle_N=1-q,
  \qquad
  \langle S_1^2\rangle_N=2(1-q)^2,
\]
and
\[
  \langle S_1^3\rangle_3=(1-q)^3(6+q-q^2),
  \qquad
  \langle S_1^3\rangle_N=6(1-q)^3\quad(N\ge4).
\]
Therefore
\[
  \langle\mathcal O_1\rangle_3
  =
  -q^{-1}(1-q)-6+(6+q-q^2)
  =
  1-q^{-1}+q-q^2,
\]
whereas
\[
  \langle\mathcal O_1\rangle_N
  =
  -q^{-1}(1-q)-6+6
  =
  1-q^{-1}
  \qquad(N\ge4).
\]

Finally we restore the boundary factor.  Since
\[
  (q;q)_\infty=\sum_{n\ge0}E(n)q^n,
\]
we get
\[
  d^{\rm eqQ}_3(3P+6,P/2)
  =
  [q^P](q;q)_\infty(1-q^{-1}+q-q^2),
\]
and
\[
  d^{\rm eqQ}_N(3P+6,P/2)
  =
  [q^P](q;q)_\infty(1-q^{-1})
  \qquad(N\ge4).
\]
Using the convention $E(n)=0$ for $n<0$, these become
\[
  d^{\rm eqQ}_3(3P+6,P/2)
  =
  E(P)-E(P+1)+E(P-1)-E(P-2),
\]
and
\[
  d^{\rm eqQ}_N(3P+6,P/2)
  =
  E(P)-E(P+1)
  \qquad(N\ge4).
\]
The last expression is precisely the large-$N$ value on this line, obtained
from eq.~\eqref{eq:formula} with $b=0$ and $b=1$.  This proves the stated
formulae.  The negative-spin line follows from the $p\leftrightarrow p^{-1}$
symmetry.
\end{proof}

Thus the first line above the kinematic boundary is still stable for
all $N\ge4$.  The rank-three deviation is entirely caused by the
first non-stable cubic constant-term identity and is encoded by the simple
Euler-pentagonal correction $E(P-1)-E(P-2)$.

\medskip\noindent\textit{Kinematic-boundary coefficient.}
The ceiling coefficient $d_N^{\rm eqQ}(j,j/6)$ is nonzero precisely when $j/3=2J_R$ is a generalized pentagonal number; the controlling index is $j/3$, not $j/6$.  For instance, $j=90$ has $j/6=15\in\mc P$ but $j/3=30\notin\mc P$, so $d_N^{\rm eqQ}(90,15)=E(30)=0$.

Writing $j=6M$ and $J_R=M-r$ with $r\ge0$, the multigraviton coefficient near the kinematic boundary becomes
\begin{equation}
  d_\infty^{\rm eqQ}(6M,M-r)
  =\sum_{0\le b\le r}
  E(2M{-}2r{+}b)\,E(b)\,\pfn(r{-}b)^3,
  \label{eq:near-frontier}
\end{equation}
since the condition $b\le r$ follows from $\pfn(r{-}b)^3=0$ for $b>r$.
In particular,
\begin{align}
  r=0:&\quad E(2M),\notag\\
  r=1:&\quad E(2M{-}2)-E(2M{-}1),\notag\\
  r=2:&\quad 8\,E(2M{-}4)-E(2M{-}3)-E(2M{-}2).
  \notag
\end{align}
The prefactor $8$ in the $r=2$ line is $\pfn(2)^3=2^3$.

\subsection{Euler pentagonal expansion}
\label{sec:pentagonal}

The pentagonal expansion of the prefactor $\Pi(x,p)$ is central to the onset analysis.  Writing $\Pi=A\cdot B$ with
\begin{align}
  A&=\prod_k(1-p^k x^{3k})=\sum_a E(a)\,p^a\,x^{3a},\notag\\
  B&=\prod_k(1-p^{-k}x^{3k})=\sum_b E(b)\,p^{-b}\,x^{3b}.
\end{align}
Multiplying the two Euler expansions gives
\[
  \Pi(x,p)
  =\sum_{a,b\ge0} E(a)E(b)\,p^{a-b}\,x^{3(a+b)}.
\]
The coefficient of $p^m$ is obtained by imposing $a-b=m$, i.e.\ $a=m+b$:
\begin{equation}
  [p^m]\,\Pi(x,p)
  =\sum_{b=0}^\infty E(m{+}b)\,E(b)\,x^{3m+6b},
  \label{eq:pmN}
\end{equation}
where $E(n)$ is the Euler pentagonal function defined in
Proposition~\ref{prop:frontier}, nonzero only at
generalized pentagonal numbers
$\{0,1,2,5,7,12,15,22,26,\ldots\}$
with values $\pm1$.
Substituting into Theorem~\ref{thm:main} and collecting powers of $p$ gives the explicit formula
\begin{equation}
  d_\infty^{\rm eqQ}(j,J_R)
  =\sum_{b=0}^\infty E(2J_R{+}b)\,E(b)\,
  \pfn\!\left(\frac{j-6J_R-6b}{6}\right)^{\!3},
  \label{eq:formula}
\end{equation}
\noindent where $\pfn(n)=0$ for $n<0$ or $n\notin\mathbb Z$.

More generally, substituting the pentagonal expansion into
Proposition~\ref{prop:charge-resolved} gives the charge-resolved profile
\begin{equation}
  d_\infty^{(\Delta_1,\Delta_2)}(j,J_R)
  :=[x^j p^{2J_R} y_1^{\Delta_1} y_2^{\Delta_2}]\,\mc I_\infty
  =\sum_{b=0}^\infty E(2J_R{+}b)\,E(b)\,
  \pfn(c)\,\pfn(c{+}\Delta_1)\,\pfn(c{+}\Delta_2),
  \label{eq:formula-resolved}
\end{equation}
with $c=\bigl(j-6J_R-6b-2(\Delta_1+\Delta_2)\bigr)/6$ and the convention that
the summand vanishes unless $c\in\mathbb Z$ with
$c\ge\max(0,-\Delta_1,-\Delta_2)$.  The equal-charge
formula~\eqref{eq:formula} is the diagonal $\Delta_1=\Delta_2=0$.

For fixed $m=2J_R$, we call
\[
  (b,a)=(b,m+b)
\]
an \emph{active pentagonal pair} if both $b$ and $a=m+b$ lie in
$\mc P$.  Equivalently,
\[
  b\in B_m:=\{b\in\mc P:m+b\in\mc P\}.
\]
Such a pair can contribute in the spin sector $J_R=m/2$; at a
given energy $j$ it contributes only if
$j\ge 3m+6b$ and $j\equiv3m\pmod6$.

Thus the sum~\eqref{eq:formula} is organized by the active set $B_m$:
all other values of $b$ vanish because at least one of $E(b)$ or
$E(m+b)$ is zero.
For many of the spin sectors used below, the range $j\le90$ lies in a single-pair regime where only one pentagonal pair contributes (cf.~Table~\ref{tab:onset}).  An instructive example is $J_R=4$:
$[p^8]\Pi(x,p)=0$ for $j<66$: the first pair of
generalized pentagonal numbers differing by~$8$ is $(b,a)
=(7,15)$, giving $E(15)\,E(7)=(-1)(+1)=-1$ at
$j=3(a+b)=3(15+7)=66$.  So
\begin{equation}
  d_\infty^{\rm eqQ}(j,4)=0
  \qquad\text{for all }j<66,
  \label{eq:JR4vanish}
\end{equation}
and the entire $J_R=4$ content at equal charges
below $j=66$ comes from finite-rank corrections.

\medskip\noindent
\textit{Support congruence.}
Formula~\eqref{eq:formula} shows that a multigraviton coefficient can be nonzero only if
\begin{equation}
  j-6J_R \in 6\mathbb Z_{\ge0}.
  \label{eq:support_congruence}
\end{equation}
Equivalently, integer $J_R$ contributes only at $j\equiv0\pmod{6}$,
while half-integer $J_R$ contributes only at $j\equiv3\pmod{6}$.
The same congruence holds at every finite~$N$ (Appendix~\ref{app:computational}).

The explicit formula~\eqref{eq:formula} can be reorganized around the finite
set of active pentagonal pairs, revealing a cancellation structure controlled
by signed contributions.  In the single-pair regime, the asymptotic growth
satisfies
$\log|d_\infty^{\rm eqQ}(j,J_R)|\sim\pi\sqrt{j-j^*(J_R)}$
as $j\to\infty$ at fixed~$J_R$, by the Hardy--Ramanujan
formula~\cite{HR}.  The fixed-$J_R$ asymptotic discussion is given at the end
of Section~\ref{sec:zero-intervals}; the onset-family derivations are
collected in Appendix~\ref{app:onset-proofs}.
\section{\texorpdfstring{Intervals with zero large-$N$ coefficient}{Intervals with zero large-N coefficient}}
\label{sec:zero-intervals}

The factorization theorem identifies, for each $J_R$, an exact energy threshold below which the large-$N$ coefficient vanishes.  This section derives these thresholds and the resulting intervals where the large-$N$ coefficient vanishes.

\noindent\textit{Onset thresholds.}
For each $J_R\ge0$, the first nonzero multigraviton coefficient
occurs at
\begin{equation}
  j^*(J_R):=6J_R+6b_{\min}(J_R),
  \qquad
  b_{\min}(J_R):=\min\{b\in\mc P : 2J_R+b\in\mc P\},
  \label{eq:onset}
\end{equation}
\noindent where $\mc P=\{0,1,2,5,7,12,15,22,26,35,40,\ldots\}$ is the set
of generalized pentagonal numbers~\cite{Andrews}.  For $j<j^*(J_R)$, the multigraviton sector is absent; any nonzero finite-$N$ coefficient in this interval comes from finite-rank corrections.  To see this, note that from eq.~\eqref{eq:formula}, $d_\infty^{\rm eqQ}(j,J_R)\ne0$ requires $b\in\mc P$ with $2J_R+b\in\mc P$ and $j\ge6J_R+6b$; the minimum is attained at $b=b_{\min}(J_R)$.

\begin{remark}[Onset algorithm]
Given $J_R$, set $m=2J_R$.  Find the smallest $b\in\mc P$ such that $m+b\in\mc P$.  Then $j^*(J_R)=6J_R+6b$.
\end{remark}

For fixed $J_R$, the set of energies $\{j:6J_R\le j<j^*(J_R)\}$ has no multigraviton contribution: $d_\infty^{\rm eqQ}(j,J_R)=0$ there because the prefactor $\Pi$ has no term there.  Any nonzero $d_N^{\rm eqQ}(j,J_R)$ in this range is therefore a finite-rank contribution.  In the giant-graviton expansion, it contributes to the aggregate wrapped-brane correction.

When such an interval overlaps the region beyond the black-hole bound $J_R>J_R^{\rm BH}(j)$, a nonzero finite-$N$ coefficient there is simultaneously beyond the BH bound and absent from the large-$N$ index.  It is convenient to measure these two features by
\begin{equation}
  \Delta_{\rm BH}(j,J_R):=J_R-J_R^{\rm BH}(j),
  \qquad
  \Delta_\infty(j,J_R):=j^*(J_R)-j,
  \label{eq:deltas}
\end{equation}
so that the coefficients of interest are those with $\Delta_{\rm BH}>0$, $\Delta_\infty>0$, and $d_3^{\rm eqQ}(j,J_R)\neq0$.

\begin{figure}[htbp]
\centering
\includegraphics[width=0.65\textwidth]{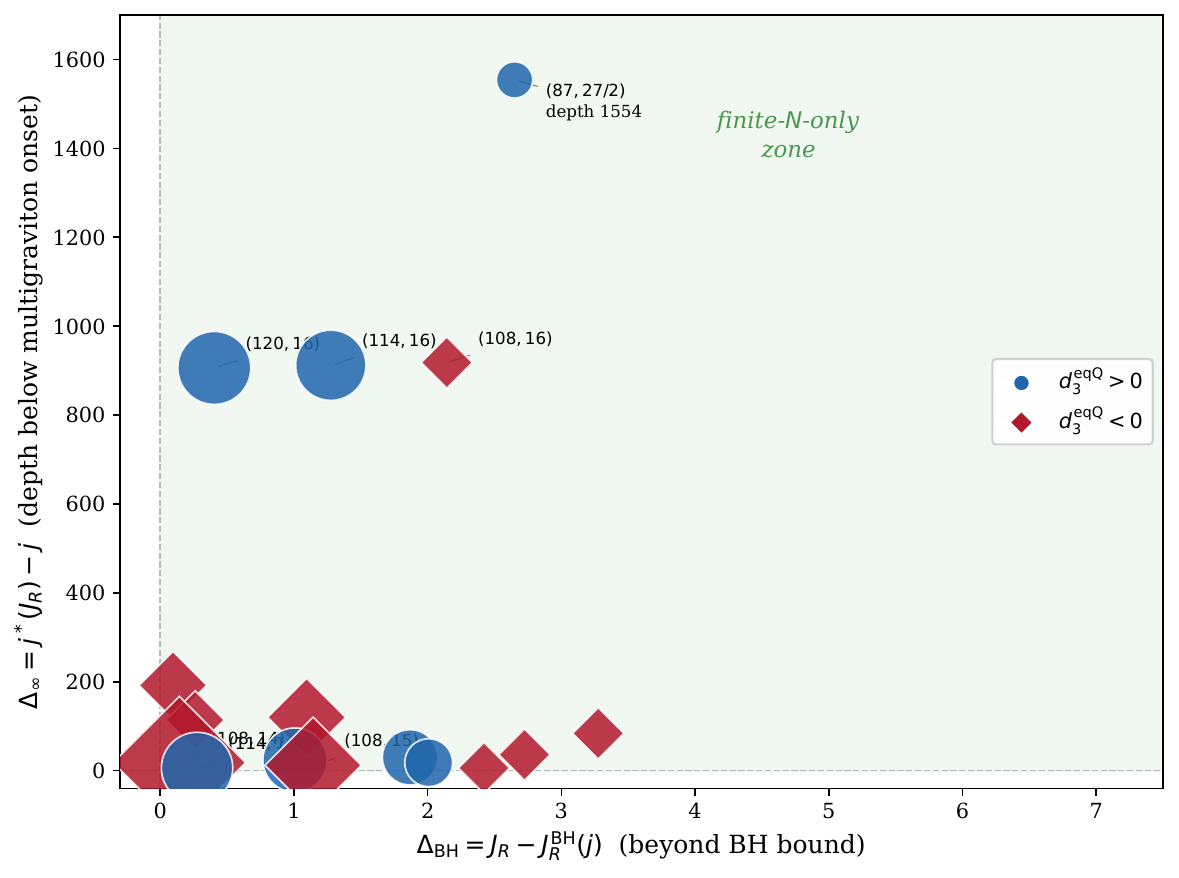}
\caption{Coefficients beyond the black-hole bound and below the large-$N$ onset.  The horizontal axis is the distance beyond the BH bound, $\Delta_{\rm BH}=J_R-J_R^{\rm BH}(j)$; the vertical axis is the depth below the large-$N$ onset, $\Delta_\infty=j^*(J_R)-j$.  Points in the upper-right quadrant are beyond the BH bound and absent from the multigraviton sector.  Marker size is proportional to $\log(1+|d_3^{\rm eqQ}|)$; circles are positive coefficients and diamonds are negative.}
\label{fig:beyond-BH}
\end{figure}
Such a sector is beyond the classical BH bound and has no multigraviton support.  In the giant-graviton expansion~\cite{GE,Imamura,LR}, these coefficients are naturally interpreted as aggregate wrapped-brane corrections.  Except for the partial sector check at $(87,27/2)$ in Section~\ref{sec:GG-decomposition}, the individual $M$-giant decomposition is not checked in this paper.

The opposite phenomenon also occurs: the finite-$N$ correction can exactly erase a multigraviton tail coefficient, so that $d_3^{\rm eqQ}(j,J_R)=0$ even though $d_\infty^{\rm eqQ}(j,J_R)\ne0$ (an example is $(72,10)$ below).

The onset data needed in the rest of the paper are summarized in Table~\ref{tab:onset}.  The large thresholds, such as $j^*(8)=258$, $j^*(16)=1026$, and $j^*(27/2)=1641$, are the source of the long intervals in which the large-$N$ coefficient is zero.  The divisor criterion is stated as Proposition~\ref{prop:onset-divisor} below; its proof, the power-of-two family, the $m=2^\beta3^\alpha$ delayed-onset family, and the checks above the onset are collected in Appendix~\ref{app:onset-proofs}.  The main text uses the resulting threshold values and the implication
\[
  j<j^*(J_R) \quad\Longrightarrow\quad d_\infty^{\rm eqQ}(j,J_R)=0.
\]

The onset problem is equivalent to finding the first pair of generalized
pentagonal numbers whose difference is $m=2J_R$.  Writing the pair as
$(a,b)=(\omega(r),\omega(s))$, the difference factors:
\[
  \omega(r)-\omega(s)
  =
  \frac{(r-s)(3(r+s)-1)}{2}.
\]
Thus every active pentagonal pair determines a factorization
$uv=2m$, with $u=r-s$ and $v=3(r+s)-1$.  The congruence and parity
conditions below are exactly the conditions that the inverse formulae for
$r$ and $s$ give integers.  The onset is therefore a finite divisor
minimization rather than a search through pentagonal numbers.

\begin{proposition}[Onset thresholds from divisor minimization]
\label{prop:onset-divisor}
Let $J_R>0$ and $m=2J_R\in\mathbb Z_{>0}$.  Call a factorization $uv=2m$ with
$u,v\in\mathbb Z$ \emph{admissible} if
\[
  v\equiv-1\pmod3,
  \qquad
  u\equiv\frac{v+1}{3}\pmod2.
\]
Admissible factorizations are in bijection with the active pentagonal pairs
$b\in B_m:=\{b\in\mc P:\,m+b\in\mc P\}$ via $u=r-s$, $v=3(r+s)-1$,
$(a,b)=(\omega(r),\omega(s))$, and the candidate onset energy of a pair is
\begin{equation}
  j\;=\;3(a+b)\;=\;\frac{9u^2+v^2-1}{4}.
  \label{eq:onset-energy}
\end{equation}
Distinct admissible factorizations give distinct candidate energies.
Consequently
\begin{equation}
  j^*(J_R)
  \;=\;\min\left\{\frac{9u^2+v^2-1}{4}\;:\;uv=2m\ \text{admissible}\right\},
  \label{eq:onset-min}
\end{equation}
the minimizer $(u_*,v_*)$ is unique, and the onset coefficient is
\begin{equation}
  d_\infty^{\rm eqQ}\bigl(j^*(J_R),J_R\bigr)
  \;=\;(-1)^{(v_*+1)/3}\in\{\pm1\}.
  \label{eq:onset-sign}
\end{equation}
The set of admissible factorizations is nonempty---it always contains
$(u,v)=(-2m,-1)$, with candidate energy $9m^2$---and $|B_m|$ equals its
cardinality.
\end{proposition}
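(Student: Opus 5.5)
The proof rests on the fact that the generalized pentagonal map $\omega(r)=r(3r-1)/2$ is injective on $\mathbb Z$. Its values at $r=0,1,-1,2,-2,\dots$ are strictly increasing, namely $0,1,2,5,7,\dots$. Consequently every element of $\mc P$ has a unique label $r$, and every active pair $b\in B_m$, $a=m+b$, corresponds to a unique pair $(r,s)\in\mathbb Z^2$ with $\omega(r)=a$, $\omega(s)=b$ and $\omega(r)-\omega(s)=m$.

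\textbf{Step 1: the bijection.} Given such $(r,s)$, set $u=r-s$ and $v=3(r+s)-1$. The factorization $\omega(r)-\omega(s)=(r-s)(3(r+s)-1)/2$ gives $uv=2m$. Clearly $v\equiv-1\pmod 3$, and $u=r-s\equiv r+s=(v+1)/3\pmod 2$, so $(u,v)$ is admissible. Conversely, take an admissible factorization and put $t=(v+1)/3\in\mathbb Z$. Then
\[
  r=\tfrac{t+u}{2},\qquad s=\tfrac{t-u}{2}
\]
are integers exactly because of the parity condition, and they satisfy $\omega(r)-\omega(s)=uv/2=m$. Since $m>0$, the pair $(a,b)=(\omega(r),\omega(s))$ automatically has $a=m+b$ with $b\in\mc P$. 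The two maps are mutually inverse. Together with the injectivity of $\omega$, this gives the bijection with $B_m$ and hence $|B_m|$ equals the number of admissible factorizations.

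\textbf{Step 2: the energy formula.} Use
\[
  a+b=\tfrac{3(r^2+s^2)-(r+s)}{2},\qquad r^2+s^2=\tfrac{t^2+u^2}{2}.
\]
These give $3(a+b)=\tfrac{9t^2-6t+9u^2}{4}=\tfrac{(3t-1)^2-1+9u^2}{4}$, which is \eqref{eq:onset-energy} since $3t-1=v$. For nonemptiness, take $(u,v)=(-2m,-1)$. This is admissible because $t=0$ and $u$ is even. It corresponds to $(r,s)=(-m,m)$, for which $\omega(-m)-\omega(m)=m$, and its candidate energy is $(36m^2+1-1)/4=9m^2$.

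\textbf{Step 3: distinctness of candidate energies.} This is the only slightly delicate step. Suppose two admissible factorizations share the same energy. Then they share both $9u^2+v^2$ and $3u\cdot v=6m$, so $(3u+v)^2=A^2$ and $(3u-v)^2=B^2$ are common to both. Hence any other solution $(u',v')$ is obtained from $(u,v)$ in one of three ways:
\begin{itemize}
\item \emph{joint sign change} $(3u',v')=(-3u,-v)$: this is impossible, since $v\equiv-1$ forces $-v\equiv1\pmod 3$;
\item \emph{swap} $(3u',v')=(v,3u)$: this is impossible, since then $v'\equiv0\pmod 3$;
\item \emph{swap with sign change} $(3u',v')=(-v,-3u)$: this is impossible for the same reason.
\end{itemize}
So $(u',v')=(u,v)$, and distinct admissible factorizations have distinct candidate energies.

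\textbf{Step 4: onset value and sign.} An active pair has $3(a+b)=3m+6b$, so ordering the candidates by energy is the same as ordering them by $b$. The minimum in \eqref{eq:onset-min} is therefore $6J_R+6b_{\min}$, which is \eqref{eq:onset}, and by Step 3 it is attained by a unique minimizer $(u_*,v_*)$. At $j=j^*$ the sum \eqref{eq:formula} reduces to a single term:
\begin{itemize}
\item terms with $b>b_{\min}$ have negative partition argument and vanish;
\item terms with $b<b_{\min}$ are inactive, so $E(b)E(m+b)=0$.
\end{itemize}
The surviving term equals $E(a)E(b)\pfn(0)^3=(-1)^{r+s}=(-1)^{t}=(-1)^{(v_*+1)/3}$, which is \eqref{eq:onset-sign}.
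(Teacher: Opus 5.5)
Your proof is correct and follows the paper's route: the same variables $u=r-s$, $v=3(r+s)-1$, injectivity of $\omega$ giving the bijection with $B_m$, the same computation of $3(a+b)$, and the same single-surviving-term evaluation of the onset sign. The one divergence is Step 3. The paper obtains distinctness of candidate energies directly from the bijection, since distinct admissible factorizations give distinct $b$ and $j=3m+6b$; your sign-change/swap case analysis is valid but not needed, and your own Step 4 observation already yields distinctness.
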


\noindent
The proof is in Appendix~\ref{app:onset-proofs}.  Both delayed-onset families
quoted below, eqs.~\eqref{eq:pow2} and~\eqref{eq:twoadic}, follow by
enumerating the admissible factorizations of $2m=2^{n+1}$ and
$2m=2^{\beta+1}3^\alpha$, and Table~\ref{tab:onset} is the output
of~\eqref{eq:onset-min} for $2J_R\le40$.  Combining the proposition with the
post-onset support analysis of Appendix~\ref{app:onset-proofs} gives the
following support statement.  The vanishing-interval implication is a theorem for
all $J_R$; the converse is a theorem at every grade (i.e.\ every fixed pair
$(j,J_R)$) where at most three pentagonal pairs contribute, and otherwise a
finite-range computational check.

For fixed $m=2J_R$, define
\[
  B_m:=\{b\in\mc P:m+b\in\mc P\},
  \qquad
  \sigma_b:=E(m+b)\,E(b)\in\{\pm1\}
  \quad (b\in B_m).
\]
When the support congruence $j-6J_R\in6\mathbb Z_{\ge0}$ holds, write
\[
  t:=\frac{j-6J_R}{6}=\frac{j-3m}{6}\in\mathbb Z_{\ge0}.
\]
Then eq.~\eqref{eq:formula} becomes
\begin{equation}
  d_\infty^{\rm eqQ}(j,J_R)
  =
  D_m(t)
  :=
  \sum_{\substack{b\in B_m\\ b\le t}}
  \sigma_b\,\pfn(t-b)^3.
  \label{eq:Dmt}
\end{equation}

\begin{theorem}[Multigraviton support and checked converse]
\label{thm:support}
Fix $J_R\ge0$ and write $m=2J_R$.  A multigraviton coefficient can be nonzero
only on the congruence class $j\equiv6J_R\pmod6$.  On that class,
\begin{equation}
  j<j^*(J_R)
  \quad\Longrightarrow\quad
  d_\infty^{\rm eqQ}(j,J_R)=0,
  \label{eq:zero-before-onset}
\end{equation}
and the onset coefficient is always
\begin{equation}
  d_\infty^{\rm eqQ}\bigl(j^*(J_R),J_R\bigr)=\pm1.
  \label{eq:onset-nonzero}
\end{equation}
For $m>0$, list $B_m=\{b_1<b_2<\cdots\}$ in increasing order and define the
\emph{fourth-pair onset}
\[
  j_4(m):=3m+6b_4\ \ \text{if }|B_m|\ge4,
  \qquad
  j_4(m):=\infty\ \ \text{otherwise},
\]
so that at most three pentagonal pairs contribute to~\eqref{eq:formula}
whenever $j<j_4(m)$.  If $J_R=0$, or if $m>0$ and $j<j_4(m)$, then the full
support on this congruence class is
\begin{equation}
  d_\infty^{\rm eqQ}(j,J_R)=0
  \quad\Longleftrightarrow\quad
  j<j^*(J_R)
  \ \ \text{or}\ \
  (j,J_R)=\bigl(9,\tfrac12\bigr).
  \label{eq:exact-support}
\end{equation}
In particular the equivalence holds at every energy whenever $|B_m|\le3$.
The smallest fourth-pair onset in the entire $(j,J_R)$ plane is $j_4(5)=225$,
at $J_R=\tfrac52$; for grades with four or more contributing pairs, the
same equivalence has been verified computationally for all allowed grades with $j\le9000$.
\end{theorem}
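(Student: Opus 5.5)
The plan is to work entirely with the reduced formula~\eqref{eq:Dmt}, $d_\infty^{\rm eqQ}(j,J_R)=D_m(t)=\sum_{b\in B_m,\,b\le t}\sigma_b\,\pfn(t-b)^3$, which already follows from Theorem~\ref{thm:main} via the pentagonal expansion~\eqref{eq:formula}.

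\textbf{Congruence, vanishing below onset, onset sign.} The support congruence is immediate: in~\eqref{eq:formula} the argument $(j-6J_R-6b)/6$ must be a nonnegative integer, which forces $j\equiv 6J_R\pmod 6$. For $j<j^*(J_R)$ we have $t<b_{\min}=b_1$, so the sum~\eqref{eq:Dmt} is empty and the coefficient vanishes; this gives~\eqref{eq:zero-before-onset}. At $t=b_1$ exactly one term survives, namely $\sigma_{b_1}\pfn(0)^3=\pm1$, which is~\eqref{eq:onset-nonzero}. This is consistent with the sign formula in Proposition~\ref{prop:onset-divisor}.

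\textbf{Converse, case $J_R=0$.} Here $B_0=\mc P$ and $\sigma_b=E(b)^2=1$. Every term is therefore positive, and $D_0(t)\ge\pfn(t)^3>0$ for all $t\ge0$.

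\textbf{Converse, case $m>0$ and $j<j_4(m)$.} At most three terms contribute, and each is $\pm$ a cube of a positive integer, since $\pfn(n)\ge1$ for $n\ge0$. I split by the number of active terms.
\begin{itemize}
\item \emph{One term.} The sum is $\pm\pfn(t-b_1)^3\neq0$.
\item \emph{Two terms.} Vanishing requires $\sigma_{b_1}=-\sigma_{b_2}$ and $\pfn(t-b_1)=\pfn(t-b_2)$ with $t-b_1>t-b_2\ge0$. Since $\pfn$ is strictly increasing on $n\ge1$ and $\pfn(0)=\pfn(1)$ is its only coincidence, this forces $t=b_2$ and $b_2=b_1+1$. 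Consecutive generalized pentagonal numbers occur only as $(0,1)$ and $(1,2)$. We need both $b_1,b_2\in\mc P$ and $m+b_1,m+b_2\in\mc P$, also consecutive, with $m>0$. The only possibility is $b_1=0$, $b_2=1$, $m=1$. There $\sigma_0=E(1)E(0)=-1$ and $\sigma_1=E(2)E(1)=+1$, so the sum does vanish at $t=1$, i.e.\ $(j,J_R)=(9,\tfrac12)$. This is exactly the exception listed in~\eqref{eq:exact-support}.
\item \emph{Three terms.} If all signs agree the sum is nonzero. Otherwise, vanishing means one positive cube equals the sum of the other two, $X^3=Y^3+Z^3$ with $X,Y,Z\ge1$. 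This is impossible by Euler's case $n=3$ of Fermat's last theorem.
\end{itemize}
Combining the three cases gives~\eqref{eq:exact-support} for $j<j_4(m)$, and in particular for all $j$ whenever $|B_m|\le3$.

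\textbf{Smallest fourth-pair onset.} Proposition~\ref{prop:onset-divisor} identifies $B_m$ with the admissible factorizations $uv=2m$, with energies $(9u^2+v^2-1)/4$. Then $j_4(m)$ is the fourth-smallest such energy. For small $m$ I enumerate the divisors of $2m$ to find which $m$ admit four admissible factorizations, and compare the resulting $j_4(m)$, expecting the minimum $j_4(5)=225$. For large $m$ I need a lower bound showing nothing smaller can occur. The candidate energies are at least $(9u^2+v^2-1)/4$ with $uv=2m$, which by AM--GM is at least $(6|uv|-1)/4=(12m-1)/4$. So only $m$ up to roughly $75$ must be checked, and this is a finite enumeration. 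I expect this step, together with getting the two-term degenerate case exactly right, to be the main bookkeeping obstacle.

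\textbf{Grades with four or more pairs.} The remaining claim is a finite computational verification of~\eqref{eq:Dmt} for $j\le9000$; it is cited rather than proved.
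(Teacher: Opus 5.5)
Your proposal is correct and follows the paper's proof essentially step for step. You reduce to $D_m(t)$, note the empty sum below onset and the single $\pm1$ term at onset, use positivity of every term for $J_R=0$, and then run the one/two/three-term case analysis using strict monotonicity of $\pfn$, the only consecutive pentagonal pairs $(0,1)$ and $(1,2)$ (forcing $m=1$ and $(j,J_R)=(9,\tfrac12)$), and Euler's $n=3$ case of Fermat. The paper treats the minimality of $j_4(5)=225$ the same way you propose: it uses the bound $j_4(m)\ge 3m$ to reduce to a finite census of admissible factorizations, which is the enumeration over $m<75$ that you outline.
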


\begin{proof}
From eq.~\eqref{eq:formula}, a nonzero coefficient can occur only if
\[
  j-6J_R\in6\mathbb Z_{\ge0},
\]
which is the support congruence~\eqref{eq:support_congruence}.  Assume from
now on that this congruence holds, set $m=2J_R$, and write
\[
  t=\frac{j-6J_R}{6}=\frac{j-3m}{6}.
\]
Then $d_\infty^{\rm eqQ}(j,J_R)=D_m(t)$ as in~\eqref{eq:Dmt}.

Let $b_1=\min B_m$.  If $t<b_1$, the sum defining $D_m(t)$ is empty, so
$D_m(t)=0$.  Since
\[
  j^*(J_R)=6J_R+6b_1=3m+6b_1,
\]
this proves~\eqref{eq:zero-before-onset}.

At $t=b_1$, only the first pentagonal pair contributes.  Hence
\[
  D_m(b_1)
  =
  \sigma_{b_1}\,\pfn(0)^3
  =
  \sigma_{b_1}
  =
  \pm1,
\]
which is~\eqref{eq:onset-nonzero}.  For $m>0$ the sign is identified by
Proposition~\ref{prop:onset-divisor}; for $m=0$ it is simply $+1$.

If $J_R=0$, then $m=0$ and every active term has sign
\[
  \sigma_b=E(b)^2=1.
\]
Moreover the $b=0$ term contributes $\pfn(t)^3>0$ for every $t\ge0$.
Therefore $D_0(t)>0$ for every allowed grade $t\ge0$.

Now assume $m>0$.  List the elements of $B_m$ as
\[
  b_1<b_2<\cdots,
\]
and set $b_4=\infty$ if $|B_m|\le3$.  If $t<b_4$, then at most three
pentagonal pairs contribute to $D_m(t)$.

With one contributing pair, $D_m(t)$ is a single nonzero signed cube.

With two contributing pairs, cancellation requires
\[
  \pfn(t-b_1)^3=\pfn(t-b_2)^3.
\]
Since $b_1<b_2$, one has $t-b_1>t-b_2\ge0$.  The partition function is
strictly increasing for $n\ge1$~\cite{Andrews}, and the only equality at the
bottom is $\pfn(0)=\pfn(1)=1$.  Hence cancellation forces
\[
  t-b_1=1,\qquad t-b_2=0,
\]
so $b_2-b_1=1$.

The only consecutive integers that are both generalized pentagonal are
$(0,1)$ and $(1,2)$.  If $(b_1,b_2)=(0,1)$, the active-pair condition requires
$m$ and $m+1$ to be consecutive generalized pentagonal numbers.  Since
$m>0$, this forces $m=1$, and the two signs are opposite.  Thus
$t=1$, so
\[
  (m,t)=(1,1),
  \qquad
  (j,J_R)=\left(3m+6t,\frac m2\right)
  =
  \left(9,\frac12\right).
\]
If $(b_1,b_2)=(1,2)$, the active-pair condition forces $m=0$, which is the
$J_R=0$ case already handled above; in that case the signs are the same and
there is no cancellation.  Therefore the unique two-pair cancellation is
$(j,J_R)=(9,\tfrac12)$.

With three contributing pairs, a cancellation can occur only if the signs split
one term against two terms.  After moving terms to opposite sides, this would
give
\[
  X^3=Y^3+Z^3
\]
for positive integers $X,Y,Z$, namely partition numbers.  This is impossible
by Euler's $n=3$ case of Fermat's theorem~\cite{HW}.  The bottom collision
$\pfn(0)=\pfn(1)=1$ creates no exception: it only gives special cases such as
$1^3+1^3=X^3$, which are also impossible.

Thus, for $m>0$ and $t<b_4$, equivalently $j<j_4(m)$, the coefficient can
vanish only below onset or at the exceptional point
$(j,J_R)=(9,\tfrac12)$.  Together with the $J_R=0$ case above, this proves
\eqref{eq:exact-support} in the stated analytic range.

It remains to record the finite-range check when four or more pentagonal pairs
contribute.  The exhaustive computation in
Appendix~\ref{app:onset-proofs} shows that the smallest fourth-pair onset is
\[
  j_4(5)=225,
  \qquad J_R=\frac52,
\]
and that all $108{,}102$ grades with four or more contributing pairs and
$j\le9000$ are nonzero.  This proves the checked converse through $j=9000$.
The grade list and the code reproducing this scan are included as ancillary
files with the arXiv submission.
\end{proof}

\noindent
In words: on its congruence class, the multigraviton coefficient vanishes
strictly below the onset, equals $\pm1$ exactly at the onset, and is nonzero
at every energy above it, with the single exception
$(j,J_R)=(9,\tfrac12)$---unconditionally wherever at most three pentagonal
pairs are in play, and verified through $j=9000$ elsewhere.

\medskip\noindent\textit{Consecutive-index benchmark.}
Write $m=2J_R$ and recall $\omega(k)=k(3k-1)/2$.  A pentagonal pair
$(b,a)=(\omega(s),\omega(r))$ with $a-b=m$ gives the onset candidate
$j=3(a+b)$.  A particularly simple subfamily occurs when the labels $r,s$ are
consecutive.  Here ``consecutive'' refers to the labels in $\omega(k)$, not to
the values $\omega(r)$ and $\omega(s)$ being consecutive integers.

If $|r-s|=1$, then the divisor variable $u=r-s$ satisfies $|u|=1$.  Since
$uv=2m$, one has $v=\pm2m$, and the candidate energy in
Proposition~\ref{prop:onset-divisor} becomes
\[
  j_{\rm cons}(J_R)
  =
  \frac{9+4m^2-1}{4}
  =
  m^2+2
  =
  4J_R^2+2.
\]
The admissibility conditions are satisfied for this $|u|=1$ factorization
exactly when $m\not\equiv0\pmod3$.  Thus, whenever $m\not\equiv0\pmod3$,
there is a consecutive-index candidate and hence
\[
  j^*(J_R)\le j_{\rm cons}(J_R).
\]
This is only a benchmark: another admissible divisor factorization of $2m$ may
give a smaller onset.  When no smaller admissible factorization exists, the
consecutive-index candidate is the actual onset.  In particular, the
power-of-two family in Appendix~\ref{app:onset-proofs} proves this situation
for $m=2^n$, giving the entries $j^*(4)=66$, $j^*(8)=258$, and
$j^*(16)=1026$ in Table~\ref{tab:onset}.

\begin{table}[!htbp]
\caption{Onset thresholds $j^*(J_R)$ and their pentagonal data.
Bold entries mark delayed onsets.
For $2J_R\not\equiv0\pmod{3}$, a consecutive-index pentagonal candidate exists with benchmark $j_{\rm cons}=4J_R^2+2$; entries such as $J_R=4,8,16$ are cases where this candidate is the actual first onset.
For $2J_R\equiv0\pmod{3}$, no consecutive-index candidate exists, and delayed onsets arise from the next admissible divisor factorization.
Derivations are collected in Appendix~\ref{app:onset-proofs}.}
\label{tab:onset}
\begin{center}
\footnotesize
\renewcommand{\arraystretch}{0.88}
\setlength{\tabcolsep}{4pt}
\begin{tabular}{r|rr|rr|r|r}
\toprule
$J_R$ & $b_{\min}$ & $a=2J_R{+}b$ & $E(a)$ & $E(b)$ & $j^*(J_R)$ & onset coeff.\ \\
\midrule
$0$ & $0$ & $0$ & $+1$ & $+1$ & $0$ & $+1$ \\
$1$ & $0$ & $2$ & $-1$ & $+1$ & $6$ & $-1$ \\
$2$ & $1$ & $5$ & $+1$ & $-1$ & $18$ & $-1$ \\
$3$ & $1$ & $7$ & $+1$ & $-1$ & $24$ & $-1$ \\
$4$ & $7$ & $15$ & $-1$ & $+1$ & $\mathbf{66}$ & $-1$ \\
$5$ & $2$ & $12$ & $-1$ & $-1$ & $42$ & $+1$ \\
$6$ & $0$ & $12$ & $-1$ & $+1$ & $36$ & $-1$ \\
$7$ & $1$ & $15$ & $-1$ & $-1$ & $48$ & $+1$ \\
$8$ & $35$ & $51$ & $+1$ & $-1$ & $\mathbf{258}$ & $-1$ \\
$\tfrac{9}{2}$ & $26$ & $35$ & $-1$ & $+1$ & $\mathbf{183}$ & $-1$ \\
$9$ & $22$ & $40$ & $-1$ & $+1$ & $\mathbf{186}$ & $-1$ \\
$10$ & $2$ & $22$ & $+1$ & $-1$ & $72$ & $-1$ \\
$11$ & $0$ & $22$ & $+1$ & $+1$ & $66$ & $+1$ \\
$12$ & $2$ & $26$ & $+1$ & $-1$ & $84$ & $-1$ \\
$13$ & $0$ & $26$ & $+1$ & $+1$ & $78$ & $+1$ \\
$14$ & $7$ & $35$ & $-1$ & $+1$ & $126$ & $-1$ \\
$15$ & $5$ & $35$ & $-1$ & $+1$ & $120$ & $-1$ \\
$16$ & $155$ & $187$ & $-1$ & $+1$ & $\mathbf{1026}$ & $-1$ \\
$17$ & $1$ & $35$ & $-1$ & $-1$ & $108$ & $+1$ \\
$\tfrac{27}{2}$ & $260$ & $287$ & $+1$ & $-1$ & $\mathbf{1641}$ & $-1$ \\
$18$ & $15$ & $51$ & $+1$ & $-1$ & $\mathbf{198}$ & $-1$ \\
$19$ & $2$ & $40$ & $-1$ & $-1$ & $126$ & $+1$ \\
$20$ & $0$ & $40$ & $-1$ & $+1$ & $120$ & $-1$ \\
\bottomrule
\end{tabular}
\end{center}
\end{table}

Table~\ref{tab:onset} also gives the delayed threshold $j^*(27/2)=1641$, the source of the $\Delta_\infty=1554$ large-$N$ vanishing interval around the coefficient $(87,27/2)$ used in Sections~\ref{sec:targeted} and~\ref{sec:GG-decomposition}.

All formulas for $d_\infty^{\rm eqQ}$ have been verified through $j=120$ by an independent constant-term extraction on the refined denominator product, with zero mismatches at every grade.  The support analysis has been checked for all $(j,J_R)$ with $j\le2400$, and at every grade with four or more contributing pentagonal pairs through $j=9000$ (Appendix~\ref{app:onset-proofs}).

\paragraph{Asymptotic growth.}
The explicit formula~\eqref{eq:formula} can be reorganized around the finite set of active pentagonal pairs $B_m:=\{b\in\mc P:m+b\in\mc P\}$ with $m=2J_R$.  For $m>0$, the set $B_m$ is finite: by Proposition~\ref{prop:onset-divisor} it is in bijection with the admissible factorizations of $2m$, and hence has at most as many elements as $2m$ has divisors.  In the single-pair regime ($|B_m|=1$), there is no cancellation and $d_\infty^{\rm eqQ}=\pm \pfn(n)^3$ is determined by one pentagonal pair, with asymptotic growth rate
\begin{equation}
  \log|d_\infty^{\rm eqQ}(j,J_R)|
  \sim \pi\sqrt{j-j^*(J_R)}
  \qquad(j\to\infty,\;J_R\;\text{fixed}),
  \label{eq:asymptotics}
\end{equation}
from the Hardy--Ramanujan formula~\cite{HR} (see also~\cite{Andrews} 
for the general partition asymptotics). The growth analysis 
of the full index is in~\cite{Murthy-growth}.

Indeed, in the single-pair regime $d_\infty^{\rm eqQ}(j,J_R)=\pm\pfn(n)^3$ with $n=(j-j^*(J_R))/6$, so
\[
  \log|\pfn(n)^3|
  \sim 3\pi\sqrt{\frac{2n}{3}}
  =\pi\sqrt{6n}
  =\pi\sqrt{j-j^*(J_R)}.
\]
When $|B_m|\ge2$, the shifted pairs share the same exponential growth rate.  
Indeed, distinct pairs have arguments $n_i=(j-6J_R-6b_i)/6$ differing by 
fixed integers, so $\sqrt{n_i}\sim\sqrt{n_1}$ as $j\to\infty$; a finite 
signed sum of terms with equal leading exponential rate can change the 
prefactor but not the rate.  Partial cancellations among signed contributions 
can therefore suppress the multigraviton coefficient below the na\"ive 
estimate at any given~$j$, but the leading exponential 
rate~\eqref{eq:asymptotics} is unchanged.

\section{\texorpdfstring{Exact $U(3)$ equal-charge computation}{Exact U(3) equal-charge computation}}
\label{sec:eqQcomputation}

This section states exactly what is computed at finite rank.  All numbers are projected $U(3)$ coefficients of the $(-1)^F$-graded index; the large-$N$ multigraviton index is used only as the reference against which finite-rank support is tested.  We then show that the high-spin tail survives the equal-charge projection and present the complete $j=66$ spin-resolved slice.

For the charge-resolved finite-rank coefficients we write
\[
  c_3(j,J_R;\Delta_1,\Delta_2)
  :=[x^j\,p^{2J_R}\,y_1^{\Delta_1}y_2^{\Delta_2}]\,
  \mc I_3(x,p,y_1,y_2),
\]
the finite-rank analogue of the charge-resolved profile $d_\infty^{(\Delta_1,\Delta_2)}$ of eq.~\eqref{eq:formula-resolved}.  Thus
\[
  d_3^{\rm eqQ}(j,J_R)=c_3(j,J_R;0,0).
\]

\paragraph{Computational method.}
All finite-rank $U(3)$ coefficients reported below are obtained by exact arithmetic over integers, not by floating-point numerical approximation.  The pruning condition~\eqref{eq:pruning} removes monomials that cannot contribute to the requested coefficient, but does not approximate the coefficient itself.  High-energy entries are reconstructed by modular arithmetic; the primes, modular reconstruction, and cross-checks are collected in
Appendix~\ref{app:computational}.

We now turn to the finite-$N$ computation, extending the equal-fugacity approach of~\cite{PZ,ACKK} to the exact equal-charge projection.  The $U(3)$ holonomy is a triple of phases $(z_1,z_2,z_3)$.  We change variables to the two ratios $u=z_1/z_2$, $v=z_1/z_3$, and the common phase.  Since all letters are in the adjoint, the integrand is neutral under the common phase and depends only on $u$ and $v$.  The integral over the common phase is therefore trivial, but the observable is still the $U(3)$ index.

The difference between $U(3)$ and $SU(3)$ is carried by the zero weights of the adjoint character: $K(1)^3$ for $U(3)$ and $K(1)^2$ for $SU(3)$ (Section~\ref{sec:U1-center}).  In the variables $u,v$ the adjoint character is
\[
  \chi_{\rm adj}^{U(3)}=3+u+u^{-1}+v+v^{-1}+u/v+v/u.
\]
The three zero weights give the factor $K(1)^3$, the six roots give the six nontrivial $K$-factors, the Vandermonde determinant gives $W(u,v)$, and the Weyl group order gives $1/6$.  The $U(3)$ index with all fugacities is
\begin{equation}
  \mc I_3(x,p,y_1,y_2)
  =\frac{K(1)^3}{6}\,
  \CT_{u,v}\!\Big[
  W(u,v)\,
  K(u)K(u^{-1})K(v)K(v^{-1})K(u/v)K(v/u)
  \Big],
  \label{eq:I3full}
\end{equation}
\noindent where
\begin{align}
  K(z)&=\exp\!\left[\sum_{m\ge1}
  \frac{f_m^{\rm ref}(x,p,y_1,y_2)\,z^m}{m}\right]
  =\sum_{n\ge0}a_n\,z^n,\notag\\
  W(u,v)&=(1-u)(1-u^{-1})(1-v)(1-v^{-1})
  (1-u/v)(1-v/u),
\end{align}
and $a_n$ starts at order $x^{2n}$.

The equal-charge projection is
\begin{equation}
  \mc I_3^{\rm eqQ}(x,p)
  =\CT_{y_1,y_2}\,\mc I_3(x,p,y_1,y_2).
  \label{eq:I3eqQ}
\end{equation}

Because $a_n=O(x^{2n})$, the coefficient of
$x^j$ requires only $a_0,\ldots,a_{\lfloor j/2\rfloor}$.
We compute coefficients of the formal series exactly over the integers, using a compiled routine that multiplies Laurent polynomials stored by their nonzero monomials.
The coefficients are integers; selected high-energy entries whose 64-bit intermediate arithmetic overflows are reconstructed from modular arithmetic, as described in Appendix~\ref{app:computational}.
For selected high-$j$ extractions, we use the following bound to prune monomials that cannot contribute to the target coefficient.

\begin{lemma}[Pruning bound]
\label{lem:pruning}
Let a partial monomial in the expansion of~\eqref{eq:I3full} have energy~$e$ and charge $q=(q_p,q_1,q_2)$ in the $(p,y_1,y_2)$ directions.  For the target coefficient $[x^j\,p^{2J_R}\,y_1^0\,y_2^0]$, define the residual charge
\begin{equation}
  \delta=(\delta_p,\delta_1,\delta_2):=(2J_R-q_p,\,-q_1,\,-q_2).
  \label{eq:residual}
\end{equation}
If the lower bound on the completion energy,
\begin{equation}
  j_{\rm lb}(\delta):=3|\delta_p|+2\bigl(\delta_1+\delta_2+3\max(-\delta_1,-\delta_2,0)\bigr),
  \label{eq:pruning}
\end{equation}
exceeds the remaining energy budget, $j_{\rm lb}(\delta)>j-e$,
then no completion of the partial monomial can contribute to the target coefficient, and the monomial is safely discarded.
\end{lemma}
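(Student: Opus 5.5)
The plan is to show that the lower bound \eqref{eq:pruning} is a subadditive ``energy floor'' that every elementary monomial obeys. Define
\[
  \phi(\delta):=3|\delta_p|+2\bigl(\delta_1+\delta_2+3\max(-\delta_1,-\delta_2,0)\bigr),
\]
so that $j_{\rm lb}=\phi$. The key claim is this: every monomial $x^{e'}p^{q_p'}y_1^{q_1'}y_2^{q_2'}u^\alpha v^\beta$ occurring in any factor of \eqref{eq:I3full} satisfies $e'\ge\phi(q')$. Granting the claim, let a completion of the partial monomial be a product of further monomials with charges $q^{(i)}$ summing to the residual $\delta$, and with total energy $\sum e^{(i)}$. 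Then
\[
  \sum_i e^{(i)}\ \ge\ \sum_i\phi(q^{(i)})\ \ge\ \phi(\delta).
\]
The first inequality is the claim applied factor by factor. The second is subadditivity of $\phi$, which I will prove below. So if $\phi(\delta)>j-e$, the total energy necessarily overshoots $j$, and the monomial cannot reach the target coefficient. Possible cancellations between monomials only remove terms from the support, so they do not affect this support argument.

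Subadditivity comes first and is routine. The map $\delta\mapsto|\delta_p|$ is convex and positively homogeneous. The map $\delta\mapsto\delta_1+\delta_2$ is linear. The map $\delta\mapsto\max(-\delta_1,-\delta_2,0)$ is a maximum of linear forms, hence convex and positively homogeneous. A nonnegative combination of such functions is sublinear, so $\phi(\delta+\delta')\le\phi(\delta)+\phi(\delta')$ and $\phi(m\delta)=m\phi(\delta)$ for $m\ge0$.

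Next I would check the claim on generators. The Vandermonde $W(u,v)$ and the gauge variables carry neither energy nor charge, and $\phi(0)=0$, so they are harmless. Every monomial of $K(z)$ and of $K(1)$ is a product of monomials of the various $f_m^{\rm ref}$. By homogeneity and subadditivity of $\phi$, it therefore suffices to treat the $m=1$ building blocks: each monomial of $f_m^{\rm ref}$ is the $m$-th power substitution of a product of these blocks. Those blocks are the three letters $y_1x^2$, $y_2x^2$, $x^2/(y_1y_2)$; the two numerator products $y_1y_2x^4$ (together with its permutations $x^4/y_1$, $x^4/y_2$) and the neutral $x^6$; and the derivatives $p^{\pm1}x^3$. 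A direct evaluation gives
\[
  \phi(0,1,0)=\phi(0,0,1)=2,\qquad \phi(0,-1,-1)=2(-2+3)=2,\qquad \phi(\pm1,0,0)=3,
\]
and each of these equals the corresponding energy. The numerator products then follow from subadditivity, or can be checked directly: for instance $\phi(0,-1,0)=2(-1+3)=4$, which matches the energy of $x^4/y_1$. Hence every monomial of the refined integrand satisfies $e'\ge\phi(q')$.

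The main point needing care is the reduction to generators. Expanding $1/(1-p^mx^{3m})$ and the other denominators produces arbitrary nonnegative powers of the blocks, and the exponential in $K$ produces arbitrary products. Both are closed under the inequality $e\ge\phi(q)$, because energy is additive while $\phi$ is subadditive and homogeneous. I would also make sure that the charge coordinates of $\delta$ in \eqref{eq:residual} use the same sign convention as in $\phi$, namely residual $=$ target $-$ partial. Once that is fixed, the argument closes.
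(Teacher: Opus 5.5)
Your proof is correct, and it reaches the bound by a route dual to the paper's.

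The paper derives the formula constructively. It models a completion as $n_1,n_2,n_3$ letters in the three numerator directions (energy $2$ each) and $r,s$ derivative letters (energy $3$ each). It then solves $n_1-n_3=\delta_1$, $n_2-n_3=\delta_2$ with the minimal choice $n_3=\max(-\delta_1,-\delta_2,0)$, bounds $3(r+s)\ge3|\delta_p|$, and adds the two minima because the $p$-charge and the $y$-charge are carried by disjoint letter types.

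You instead take the closed form $\phi$ as given and certify it as a lower bound. It is sublinear, and it equals the energy on each of the five elementary letters $y_1x^2$, $y_2x^2$, $x^2/(y_1y_2)$, $p^{\pm1}x^3$. Additivity of energy then propagates $e\ge\phi(q)$ to all products.

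The paper's version shows where the formula comes from. Its explicit minimizer shows that $\phi$ is exactly the minimum energy over products of letters with residual charge $\delta$, i.e. the sharpest bound available from the letter gradings alone. Your version never needs the minimizer, only the inequality. It also makes explicit the reduction the paper leaves implicit: the $m$-th power substitutions (handled by homogeneity), the exponential in $K$, and the charge-neutral factor $W$ and gauge variables. Additivity of the $p$- and $y$-parts comes for free from $\phi$ being a sum of sublinear pieces, rather than from a separate disjointness remark.

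Your separate check of numerator products such as $x^4/y_1$ is harmless but redundant, since these are literally products of the three numerator letters.
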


\noindent The proof, which shows that the energy required to complete the $y$-charge and $p$-charge is additive, is given in Appendix~\ref{app:computational}.  Concretely, $3|\delta_p|$ is the minimal energy required of repairing the residual $p$-charge using the two denominator directions $p^{\pm m}x^{3m}$, while the second term is the minimal energy required of repairing the two $A_2$ charge differences using the three numerator directions $y_1 x^2$, $y_2 x^2$, $(y_1 y_2)^{-1}x^2$.  These lower bounds add because the $p$-charge and $y$-charge repairs are carried by different letter types.

\paragraph{Tail survival under equal-charge projection.}
For $U(3)$, the BH existence boundary from~\cite{PZ} eq.~(4.7) is
\begin{equation}
  J_R^{\rm BH}(j)
  =\frac16\sqrt{j^2 + 2j\!\left(9-2\sqrt{2j+9}\right)
  +18\!\left(3-\sqrt{2j+9}\right)}.
  \label{eq:JRBH}
\end{equation}
The tail beyond this bound, first observed at equal fugacity in~\cite{PZ}, survives the equal-charge projection.  The main evidence is concentrated into two sets of data: the complete $j=66$ slice and selected high-spin examples.

\begin{figure}[htbp]
\centering
\includegraphics[width=0.85\textwidth]{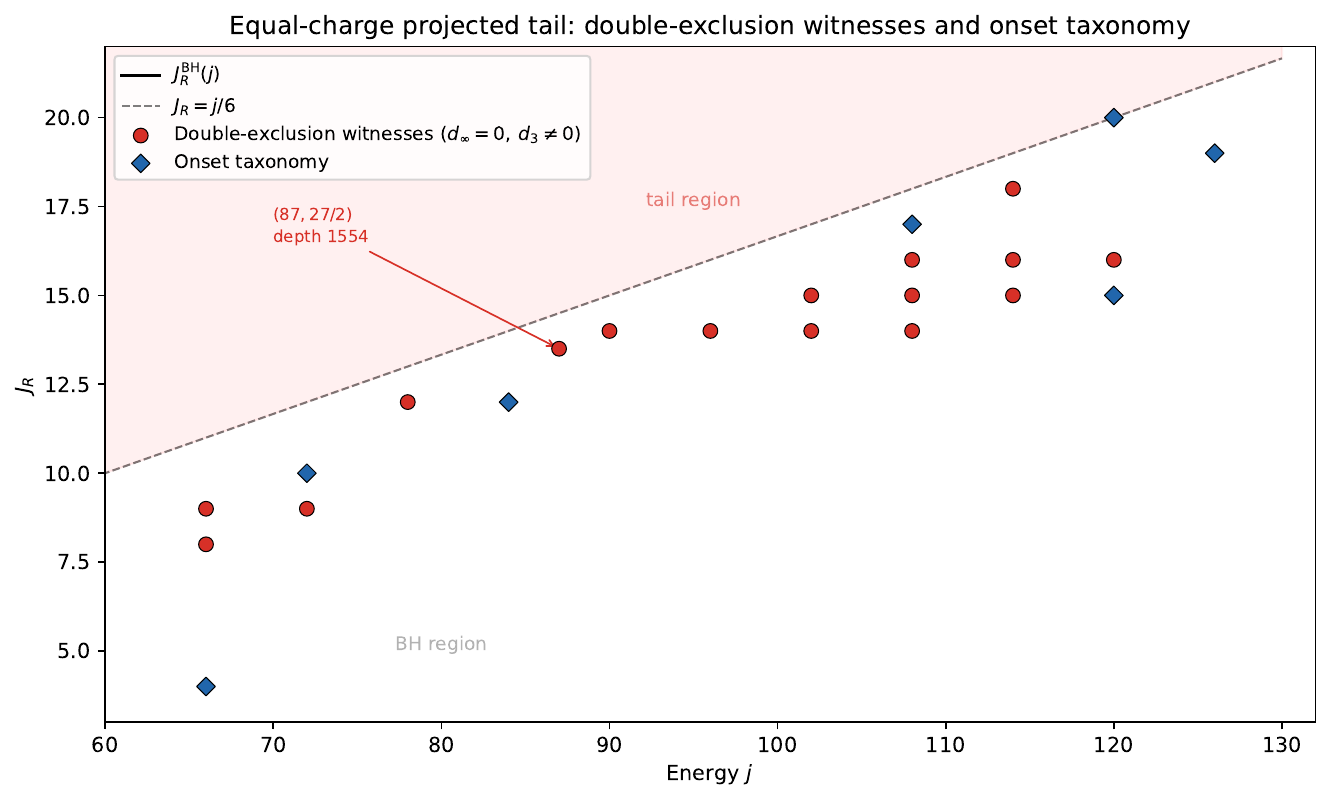}
\caption{Support map of the equal-charge projected tail in the selected high-spin range.  The black curve is the BH boundary $J_R^{\rm BH}(j)$; the dashed line is the kinematic ceiling $J_R=j/6$.  Red circles mark finite-rank coefficients beyond the BH bound with $d_\infty^{\rm eqQ}=0$ and $d_3^{\rm eqQ}\ne0$.  Blue diamonds mark the large-$N$ onset $j^*(J_R)$.}
\label{fig:supportmap}
\end{figure}

\subsection{\texorpdfstring{The complete $j=66$ slice}{The complete j=66 slice}}
\label{sec:j66}

The factorization theorem implies that $d_\infty^{\rm eqQ}(j,4)=0$ for all $j<66$, because the $[p^8]$ coefficient of the prefactor $\Pi(x,p)$ first becomes nonzero at $x^{66}$.  This requires a pair of generalized pentagonal numbers differing by~$8$; the first such pair is $\{7,15\}$, with $7=\omega(-2)$ and $15=\omega(-3)$, contributing at $x^{3(7+15)}=x^{66}$.  At $j=66$ itself, $d_\infty^{\rm eqQ}(66,4)=-1$.  For comparison, the equal-fugacity index at this charge is dominated by BH-scale coefficients~\cite{PZ}.

Using the pruning bound~\eqref{eq:pruning}, we compute the complete $j=66$ phase-space slice exactly:

\begin{table}[htbp]
\caption{Complete $j=66$ phase-space slice.  The table compares exact $U(3)$ projected coefficients with the multigraviton values from eq.~\eqref{eq:formula} and the correction $G_3^{\rm eqQ}:=d_3^{\rm eqQ}-d_\infty^{\rm eqQ}$.  The BH bound is $J_R^{\rm BH}(66)\simeq7.905$.}
\label{tab:j66full}
\begin{center}
\small
\begin{tabular}{r|r|r|r|l}
\toprule
$J_R$ & $d_3^{\rm eqQ}$ & $d_\infty^{\rm eqQ}$
& $G_3^{\rm eqQ}$ & comment \\
\midrule
$0$ & $270{,}582{,}770$ & $278{,}160$ & $270{,}304{,}610$ & BH \\
$1$ & $229{,}522{,}888$ & $-73{,}745$ & $229{,}596{,}633$ & BH \\
$2$ & $138{,}442{,}446$ & $-10{,}648$ & $138{,}453{,}094$ & BH \\
$3$ & $57{,}041{,}198$ & $-3{,}375$ & $57{,}044{,}573$ & BH \\
$4$ & $14{,}703{,}319$ & $-1$ & $14{,}703{,}320$ & BH; $d_\infty^{\rm eqQ}$ at onset \\
$5$ & $1{,}900{,}704$ & $124$ & $1{,}900{,}580$ & BH \\
$6$ & $22{,}938$ & $-343$ & $23{,}281$ & BH \\
$7$ & $-12{,}595$ & $27$ & $-12{,}622$ & near BH boundary \\
$8$ & $-9$ & $0$ & $-9$ & beyond BH; $j^*(8)=258$ \\
$9$ & $-26$ & $0$ & $-26$ & beyond BH; $j^*(9)=186$ \\
$10$ & $0$ & $0$ & $0$ & $j^*(10)=72>66$ \\
$11$ & $1$ & $1$ & $0$ & $d_\infty^{\rm eqQ}$ at onset \\
\bottomrule
\end{tabular}
\end{center}
\end{table}

Since $J_R^{\rm BH}(66)\simeq7.905$, the charge sectors $J_R=8,9,10,11$ lie beyond the BH bound.  The $J_R=8$ and $J_R=9$ rows of Table~\ref{tab:j66full} have all three properties: they lie beyond the BH bound, their multigraviton coefficients vanish (with $j^*(8)=258$ and $j^*(9)=186$), and their finite-$N$ coefficients are nonzero.

At $J_R=11$, by contrast, the coefficient $d_3^{\rm eqQ}(66,11)=1$ coincides with $d_\infty^{\rm eqQ}(66,11)=1$, so the aggregate giant-graviton correction vanishes: this beyond-BH coefficient is already accounted for by the large-$N$ multigraviton coefficient.

In the BH regime ($J_R\le7$), the finite-rank correction dominates $d_\infty^{\rm eqQ}$ in magnitude by orders of magnitude, ranging up to seven orders near the $J_R=4$ multigraviton onset ($d_\infty^{\rm eqQ}=-1$, $d_3^{\rm eqQ}=14{,}703{,}319$).

Sign inversion occurs at $J_R=7$: the finite-$N$ coefficient $d_3^{\rm eqQ}(66,7)=-12{,}595$ is negative while the multigraviton value $d_\infty^{\rm eqQ}(66,7)=+27$ is positive, consistent with partial cancellation between the multigraviton and finite-rank contributions near the BH boundary.

\subsection{\texorpdfstring{Selected high-spin extractions}{Selected high-spin extractions}}
\label{sec:targeted}

To test whether nonzero finite-rank coefficients persist beyond the BH bound outside the complete $j=66$ slice, we performed selected coefficient extractions in the high-spin region.  The integer-multiple-of-six slices $j=72,78,84,90,96,102,108$ were computed across the full beyond-BH band, while $j=87,114,120$ were selected to probe cases far below the large-$N$ onset.

\begin{table}[htbp]
\caption{Examples used in the main argument.  Each row has $J_R>J_R^{\rm BH}(j)$, $d_\infty^{\rm eqQ}=0$, and $d_3^{\rm eqQ}\ne0$.  The depth is $\Delta_\infty=j^*(J_R)-j$.  The rows with $j=6J_R+6$, namely
$(78,12)$, $(87,27/2)$, $(90,14)$, and $(114,18)$, are covered by Proposition~\ref{prop:strip}; the remaining rows probe farther inside the allowed region.}
\label{tab:main-examples}
\begin{center}
\footnotesize
\begin{tabular}{r|r|r|r|r}
\toprule
$j$ & $J_R$ & $d_3^{\rm eqQ}$ & $j^*(J_R)$ & $\Delta_\infty$ \\
\midrule
$66$ & $8$ & $-9$ & $258$ & $192$ \\
$66$ & $9$ & $-26$ & $186$ & $120$ \\
$72$ & $9$ & $-3$ & $186$ & $114$ \\
$78$ & $12$ & $-1$ & $84$ & $6$ \\
$87$ & $27/2$ & $1$ & $1641$ & $1554$ \\
$90$ & $14$ & $-1$ & $126$ & $36$ \\
$96$ & $14$ & $27$ & $126$ & $30$ \\
$102$ & $14$ & $93$ & $126$ & $24$ \\
$102$ & $15$ & $9$ & $120$ & $18$ \\
$108$ & $14$ & $-5{,}648$ & $126$ & $18$ \\
$108$ & $15$ & $-162$ & $120$ & $12$ \\
$108$ & $16$ & $-1$ & $1026$ & $918$ \\
$114$ & $15$ & $255$ & $120$ & $6$ \\
$114$ & $16$ & $218$ & $1026$ & $912$ \\
$114$ & $18$ & $-1$ & $198$ & $84$ \\
$120$ & $16$ & $317$ & $1026$ & $906$ \\
\bottomrule
\end{tabular}
\end{center}
\end{table}

The example set is therefore not confined to the first complete slice.  The row $(87,27/2)$ probes the largest value of $j^*(J_R)-j$ in Table~\ref{tab:main-examples}, while the $J_R=16$ examples give a second cluster with depths near $900$.

The kinematic-boundary coefficient is rank-independent: Proposition~\ref{prop:frontier} gives
\[
d_N^{\rm eqQ}(j,j/6)=E(j/3)
\]
for all $N$, and Proposition~\ref{prop:strip} determines the next line $j=6J_R+6$.  Outside these two lines there is no comparable general exact formula for the finite-$N$ coefficients at onset.  Three representative outcomes are
\[
\begin{array}{rcll}
  (72,10):  & d_\infty^{\rm eqQ}=-1, & d_3^{\rm eqQ}=0, & \text{exact cancellation},\\[2pt]
  (126,19): & d_\infty^{\rm eqQ}=1,  & d_3^{\rm eqQ}=-26, & \text{sign reversal},\\[2pt]
  (126,14): & d_\infty^{\rm eqQ}=-1, & d_3^{\rm eqQ}=523{,}465{,}399, & \text{large finite-rank coefficient}.
\end{array}
\]
Thus the formulas proved here control $J_R=j/6$ and $j=6J_R+6$, but they do not give a rank-independent formula for general interior coefficients.

Finite-rank dominance is not restricted to energies where the large-$N$ coefficient is zero.  For $J_R=2$, the second pentagonal pair first enters at $(j,J_R)=(144,2)$, where
\[
  d_\infty^{\rm eqQ}(144,2)=-496{,}793{,}087,
  \qquad
  d_3^{\rm eqQ}(144,2)=-116{,}665{,}457{,}868{,}988{,}568,
\]
so $|d_3/d_\infty|\simeq2.35\times10^8$ even where the large-$N$ multigraviton coefficient is already large.

The same effect appears after summing over all spin sectors.  Writing
\[
  D_N(j):=d_N^{\rm eqQ}(j,0)+2\sum_{J_R>0}d_N^{\rm eqQ}(j,J_R)
\]
for the equal-charge coefficient at $p=1$, the $p=1$ totals described in Appendix~\ref{app:computational} give $|D_3(j)/D_\infty(j)|\simeq480$ at $j=48$ and $\simeq6.1\times10^5$ at $j=90$.  Thus at rank $N=3$ the finite-rank corrections increasingly dominate the multigraviton sector as the energy grows in the computed range.

\subsection{\texorpdfstring{The $U(1)$ and $SU(3)$ questions}{The U(1) and SU(3) questions}}
\label{sec:U1-center}

All computations in this paper use the $U(N)$ index, retaining the free $U(1)$ center-of-mass mode.  At large $N$ this distinction decouples from the leading entropy~\cite{KMMR}, but at $N=3$ it can contribute at subleading order.  The $U(1)$ projected index is
\begin{equation}
  \mc I_1(x,p,y_1,y_2)
  =\exp\!\left[\sum_{m\ge1}
  \frac{f_m^{\rm ref}(x,p,y_1,y_2)}{m}\right],
  \label{eq:I1-PE}
\end{equation}
\noindent and it is not the same formal series as the multigraviton index $\mc I_\infty(x,p,y_1,y_2)$.  It has support in some of the same sectors as the finite-$N$ tail; for example
\[
  d_1^{\rm eqQ}(30,4)=-6,
  \qquad d_1^{\rm eqQ}(36,4)=-35,
  \qquad d_1^{\rm eqQ}(66,8)=193.
\]
Thus the free center cannot be ignored at $N=3$.

The free center can, however, be factored out exactly.  Because the $U(1)$
acts trivially on the adjoint, the refined index factorizes,
\begin{equation}
  \mc I_{U(3)}(x,p,y_1,y_2)
  \;=\;
  \mc I_{U(1)}(x,p,y_1,y_2)\,
  \mc I_{SU(3)}(x,p,y_1,y_2),
  \label{eq:U1SU3-factorization}
\end{equation}
as is manifest in~\eqref{eq:I3full}: $\mc I_3=K(1)\cdot\bigl[\tfrac16
K(1)^2\,\CT_{u,v}(\cdots)\bigr]$, and the bracket is the $SU(3)$ index, since
$\operatorname{adj}_{SU(3)}$ has zero-weight multiplicity two while the $A_2$
root weights, the Vandermonde factor $W(u,v)$, and the constant-term extraction are exactly
those of~\eqref{eq:I3full}.

The quotient by the common $\mathbb Z_3$ center
in $U(3)\simeq (U(1)\times SU(3))/\mathbb Z_3$ does not affect the index here because all
letters are in the adjoint and the center acts trivially.  Two consequences are
worth recording.  First,
the zero-weight replacement $K(1)^3\to K(1)^2$ used below computes the genuine
$SU(3)$ superconformal index, not merely an adjoint-character variant; the
qualification attached to Table~\ref{tab:su3-adjoint-check} concerns coverage
and the extent of the check, not the identity of the observable.

Equation~\eqref{eq:U1SU3-factorization} also supplies an end-to-end consistency check
that bypasses the direct $U(3)$ constant-term computation entirely: charge-resolved $SU(3)$ profiles
convolved with the explicitly known free series $K(1)$ must reproduce every
$U(3)$ value reconstructed from modular arithmetic.  The constant-term extraction does not distribute over the
product, so the equal-charge projections themselves do not factorize; the
convolution must be performed at the charge-resolved level before projecting.

We also checked selected coefficients in the interacting $SU(3)$ sector.  The adjoint-character identity $\operatorname{adj}_{U(3)}=\operatorname{adj}_{SU(3)}\oplus\mathbf 1$ changes the zero-weight multiplicity from three to two while leaving the six $A_2$ root weights, the Vandermonde factor $W(u,v)$, and the constant-term extraction unchanged.  In the notation of the computation this is the replacement $K(1)^3\to K(1)^2$.

An implementation using modular arithmetic on a CPU first reproduced the $U(3)$ checks
\begin{equation}
  c_3(30,0;0,0)=2628,\qquad
  c_3(30,0;1,-1)=c_3(30,0;2,1)=1970,
\end{equation}
and the examples with $d_\infty^{\rm eqQ}=0$, namely $d_3^{\rm eqQ}(66,8)=-9$, $d_3^{\rm eqQ}(87,27/2)=1$, and $d_3^{\rm eqQ}(108,16)=-1$.  Repeating selected extractions with the zero-weight multiplicity set to two gives the following coefficients.  By the factorization argument above these are genuine $SU(3)$ coefficients for the displayed targets, which we denote $d_{SU(3)}^{\rm eqQ}$; here ``selected'' means that only these entries were computed, not that a different observable was used.

These entries are reconstructed from two modular primes using the Chinese remainder theorem; the displayed integer is the representative with smallest absolute value modulo the product of the primes.

\begin{table}[htbp]
\caption{Genuine $SU(3)$ coefficients $d_{SU(3)}^{\rm eqQ}$ for selected examples with $d_\infty^{\rm eqQ}=0$.  Here $p_{\rm exp}=2J_R$, and ``charge perm.'' denotes the charge-permutation check $c_3(j,J_R;1,-1)=c_3(j,J_R;2,1)$.}
\label{tab:su3-adjoint-check}
\begin{center}
\footnotesize
\begin{tabular}{r|r|r|r|r|l}
\toprule
$j$ & $J_R$ & $p_{\rm exp}$ & $d_3^{\rm eqQ}$ & $d_{SU(3)}^{\rm eqQ}$ & check \\
\midrule
$30$  & $4$       & $8$  & $1$       & $-45$       & two primes \\
$66$  & $8$       & $16$ & $-9$      & $5854$      & two primes; charge perm. \\
$87$  & $27/2$    & $27$ & $1$       & $-406$      & two primes; charge perm. \\
$96$  & $14$      & $28$ & $27$      & $50{,}594$  & two primes \\
$108$ & $14$      & $28$ & $-5648$   & $-3{,}790{,}015$ & two primes \\
$108$ & $16$      & $32$ & $-1$      & $89{,}041$ & two primes; charge perm. \\
\bottomrule
\end{tabular}
\end{center}
\end{table}

The tested examples therefore do not disappear when the $U(1)$ zero-weight mode is removed.  In several cases the coefficient becomes much larger in magnitude: for example $d_3^{\rm eqQ}(108,16)=-1$ while $d_{SU(3)}^{\rm eqQ}(108,16)=89{,}041$, and $d_3^{\rm eqQ}(87,27/2)=1$ while $d_{SU(3)}^{\rm eqQ}(87,27/2)=-406$.  This disfavors the explanation that these coefficients are produced solely by the decoupled center.

In the light of eq.~\eqref{eq:U1SU3-factorization} this is the expected generic situation: a small $U(3)$ coefficient such as $d_3^{\rm eqQ}(87,27/2)=1$ results from cancellations in the convolution of the free-center series with the larger interacting-sector coefficient $d_{SU(3)}^{\rm eqQ}(87,27/2)=-406$, dressed by the free-center tower.  Thus the unit value understates the interacting-sector support.

We keep the main statement of the paper in the $U(3)$ language because the main scan is a $U(3)$ scan; the $SU(3)$ entries above are genuine coefficients for their displayed targets, not a full independent $SU(3)$ scan.

\section{Black-hole entropy and giant-graviton structure}
\label{sec:BH-physics}

The results of the preceding sections establish the arithmetic of the equal-charge projected index: which charge sectors have nonzero coefficients, for which sectors the large-$N$ coefficient is zero, and where finite-$N$ corrections dominate the multigraviton index.  We now turn to the physics: how the projected index compares with the black-hole entropy, and what the finite-$N$ corrections reveal about the underlying brane structure.

\subsection{\texorpdfstring{Entropy comparison at $j=66$}{Entropy comparison at j=66}}

At $j=66$, the classical AdS$_5$ BH entropy evaluated with the $N=3$ normalization~\cite{PZ,ACKK} gives
\begin{equation}
  S_{\rm BH}^{N=3}(66,J_R)
  \in\{24.28,\,24.10,\,23.53,\,22.55,\,21.08,\,19.00,\,16.05,\,11.52\},
  \label{eq:SBH66}
\end{equation}
for $J_R=0,1,\ldots,7$, with $S_{\rm BH}=0$ for $J_R\ge J_R^{\rm BH}(66)\simeq7.905$.

Figure~\ref{fig:entropy} compares $\log|d_3^{\rm eqQ}(66,J_R)|$ with $S_{\rm BH}$ across the full $J_R$ spectrum.  In the BH bulk ($J_R=0,\ldots,5$), the deficit
\begin{equation}
  \log|d_3^{\rm eqQ}(66,J_R)|-S_{\rm BH}^{N=3}(66,J_R)
  \approx -4.72
  \label{eq:deficit}
\end{equation}
is approximately constant, with maximum deviation $0.18$.

At this single slice, the projected finite-$N$ index has a bulk $J_R$-dependence roughly parallel to the classical entropy curve, with a fixed offset attributable to the restriction $Q_1=Q_2=Q_3$, cancellations caused by the $(-1)^F$ sign in the index, and small-$N$ effects.  At $J_R=6,7$ the tracking breaks down---the index drops faster than the entropy---and beyond $J_R^{\rm BH}$ the entropy vanishes while the index tail survives.  The multigraviton index $\log|d_\infty^{\rm eqQ}|$ lies far below both curves throughout the region inside the classical BH bound.  Thus at rank $N=3$ the finite-rank correction, not the multigraviton index, controls the magnitude $|d_3^{\rm eqQ}|$.

All coefficients studied here are $(-1)^F$-graded signed coefficients of the superconformal index.  We therefore use $|d|$ only to track exponential growth, not as a literal positive degeneracy.  This distinction is important at finite~$N$, where the positive BPS partition function~\cite{CaboBizetZBPS} rather than a single coefficient of the $(-1)^F$-graded index is the natural total-state-counting object.

\begin{figure}[t]
\centering
\includegraphics[width=0.85\textwidth]{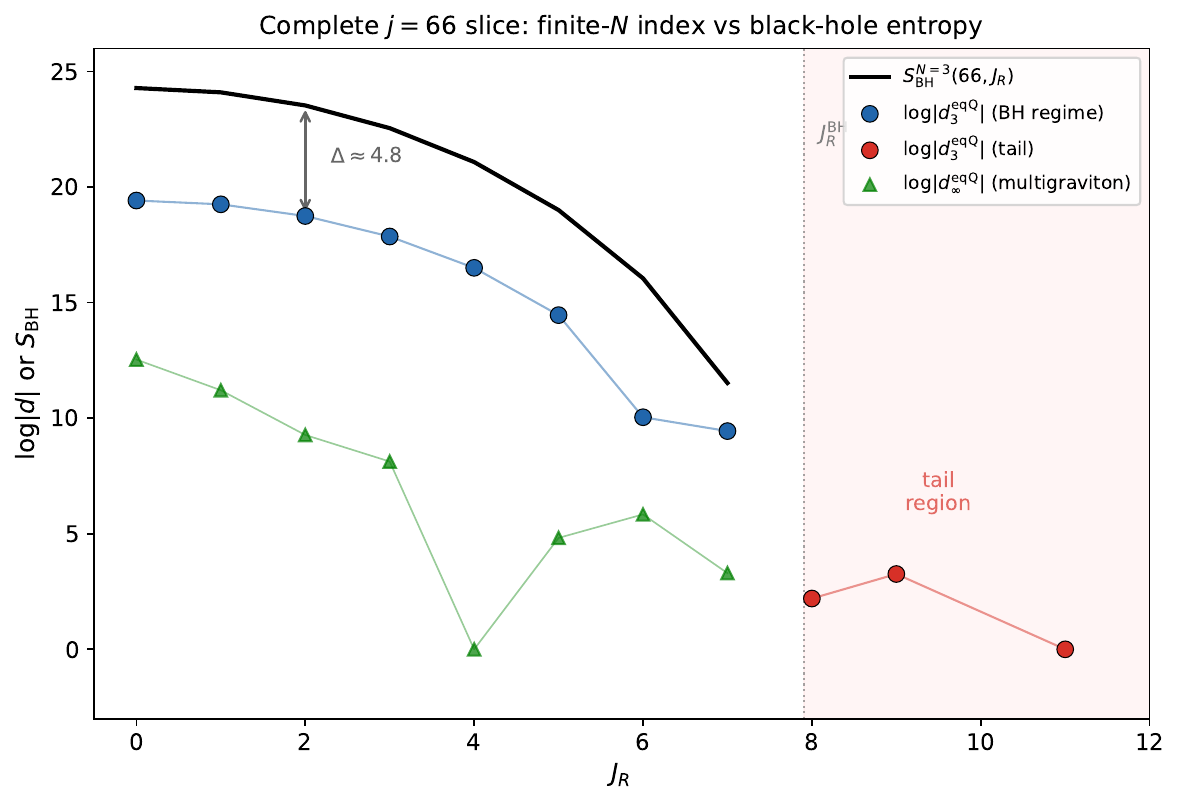}
\caption{Complete $j=66$ equal-charge slice.  The plot compares the classical BH entropy $S_{\rm BH}^{N=3}(66,J_R)$ with $\log|d_3^{\rm eqQ}|$ and $\log|d_\infty^{\rm eqQ}|$.  The finite-rank index roughly tracks the BH entropy in the bulk with an approximately constant deficit, while the tail survives beyond $J_R^{\rm BH}$.}
\label{fig:entropy}
\end{figure}

This comparison is more restrictive than the equal-fugacity analysis of~\cite{PZ,ACKK}, which tunes $y_1=y_2=1$ but retains all charge sectors.  The equal-charge projection isolates $Q_1=Q_2=Q_3$ microcanonically, matching the charge assignment of the supersymmetric rotating black hole~\cite{GR,CCLP}.  The near-constant deficit~\eqref{eq:deficit} is consistent with a local Gaussian profile, i.e.\ with restricting an approximately Gaussian distribution of charge differences to the equal-charge slice.  The charge-resolved large-$N$ formula makes this interpretation quantitative and gives the benchmark used below.

Physically, a Gaussian profile is the expected statistical mechanics of the
free multigraviton gas: the charge differences $(\Delta_1,\Delta_2)$ of a
multi-letter state are sums of many independent letter charges, so a
central-limit profile with computable width is natural.
Proposition~\ref{prop:charge-resolved} makes this expectation quantitative at
large $N$: expanding each active pentagonal-pair contribution gives the
Gaussian width below, while grades with strong signed cancellations can show
additional deviations.

Writing $L(c)=\log\pfn(c)$ and expanding the
profile~\eqref{eq:formula-resolved} at fixed $(j,J_R)$ to second order in
$(\Delta_1,\Delta_2)$ gives, on its support
sublattice,\footnote{With $\delta=(\Delta_1+\Delta_2)/3$, the three partition
arguments in~\eqref{eq:formula-resolved} are shifted by
$(-\delta,\,\Delta_1-\delta,\,\Delta_2-\delta)$ relative to the $\Delta=0$
profile at the same $(j,J_R)$; the shifts sum to zero and their squares sum to
$\tfrac23(\Delta_1^2+\Delta_2^2-\Delta_1\Delta_2)$,
giving~\eqref{eq:gaussian-profile} at second order.}
\begin{equation}
  \frac{d_\infty^{(\Delta_1,\Delta_2)}(j,J_R)}
       {d_\infty^{(0,0)}(j,J_R)}
  \;\simeq\;
  \exp\!\Bigl[\tfrac13\,L''(c)\,
  \bigl(\Delta_1^2+\Delta_2^2-\Delta_1\Delta_2\bigr)\Bigr],
  \qquad c\simeq j/6,
  \label{eq:gaussian-profile}
\end{equation}
a Gaussian whose variance along the $(\Delta,0)$ slice used below is
\begin{equation}
  \sigma^2=-\frac{3}{2L''(c)}=\frac{j^{3/2}}{2\pi}\bigl(1+O(j^{-1/2})\bigr)
  \label{eq:sigma}
\end{equation}
by the Hardy--Ramanujan asymptotics
$\log\pfn(c)=\pi\sqrt{2c/3}-\log(4\sqrt3\,c)+O(c^{-1/2})$~\cite{HR}; the
logarithmic term is the prefactor correction to $L''(c)$ used in the numerical
check below.  By the equal-charge suppression factor we mean
the coefficient-level ratio between the charge-summed large-$N$ coefficient and
its equal-charge constant term at fixed $(j,J_R)$; the
profile~\eqref{eq:gaussian-profile} with width~\eqref{eq:sigma} implies a power
law $\propto j^{3/2}$, not an exponential suppression.

The exact coefficients follow this prediction in the sampled regimes where the
signed pentagonal-pair cancellations are not anomalously large.  One-dimensional profile ratios are
useful checks, but at small $j$ their numerical values depend on how one
samples the exact $A_2$ support sublattice, and the quadratic form
in~\eqref{eq:gaussian-profile} makes one-dimensional variances
direction-dependent as well.

For the numerical checks we use a concrete one-dimensional slice of the
support sublattice, namely $(\Delta_1,\Delta_2)=(\Delta,0)$.  The slices
obtained from this one by the multiset symmetry of
eq.~\eqref{eq:charge-resolved} are equivalent.  Along this slice, the measured
variance agrees with~\eqref{eq:sigma} after including the Hardy--Ramanujan
prefactor correction in $L''(c)$; the
agreement is at the few-percent level, and is within $2\%$ at $j=600$.  The
integrated equal-charge suppression factor follows the same predicted power
law.

Since the numerator and denominator are coefficients of a $(-1)^F$-graded
index, the suppression factor measures suppression in the index, not a ratio of
positive degeneracies.  The exact value at $j=66$ (Table~\ref{tab:j66full}) is
\[
  \frac{[x^{66}p^0]\,\mc I_\infty(x,p)}{d_\infty^{\rm eqQ}(66,0)}
  =\frac{70{,}634{,}952}{278{,}160}\simeq254,
  \qquad
  \log254\simeq5.5,
\]
of the same size as the observed finite-$N$ deficit~\eqref{eq:deficit}.  The
spin-summed totals show the same scaling: with $\mc I_\infty(x,p)$ as
in~\eqref{eq:Iinf_eqfug},
a log--log least-squares fit of $\log\bigl([x^j]\,\mc I_\infty(x,1)/D_\infty(j)\bigr)$ against $\log j$ over the nonzero grades $600\le j\le3000$ gives slope $1.483$ with $R^2=0.999998$.

\FloatBarrier

\subsection{\texorpdfstring{Sector check at \((87,27/2)\)}{Sector check at (87,27/2)}}
\label{sec:GG-decomposition}

The giant-graviton expansion~\cite{GE,Imamura,LR,Murthy-GG,ABMM,CKLL-GG2BH,Beccaria-CB} organizes the finite-$N$ correction as
\begin{equation}
  \frac{\mc I_N(x,p,y_1,y_2)}{\mc I_\infty(x,p,y_1,y_2)}=1+\sum_{M\ge1}\mc G_N^{(M)},
  \label{eq:GG-expansion}
\end{equation}
\noindent with projected coefficient-level contributions
\begin{equation}
  h_{N,M}^{\rm eqQ}(j,J_R)
  =[x^jp^{2J_R}y_1^0y_2^0]~\mc I_\infty(x,p,y_1,y_2)\,\mc G_N^{(M)}.
  \label{eq:h-def}
\end{equation}
Thus
\begin{equation}
  d_N^{\rm eqQ}(j,J_R)-d_\infty^{\rm eqQ}(j,J_R)
  =\sum_{M\ge1}h_{N,M}^{\rm eqQ}(j,J_R).
  \label{eq:coef-GG-expansion}
\end{equation}
The present paper computes the aggregate correction $G_3^{\rm eqQ}$ exactly.  For most coefficients it does not separate $G_3^{\rm eqQ}$ into the individual $h_{3,M}^{\rm eqQ}$, and this distinction is important: small aggregate numbers can result from cancellations among several sectors rather than from a single wrapped-brane sector.

The following rank differences illustrate this point.  Writing the two rank steps as $(d_2^{\rm eqQ}-d_\infty^{\rm eqQ})$ and $(d_3^{\rm eqQ}-d_2^{\rm eqQ})$, for example,
\begin{align*}
  G_3^{\rm eqQ}(126,19)&=-27=58-85,\\
  G_3^{\rm eqQ}(84,12)&=28=-67+95,\\
  d_3^{\rm eqQ}(96,14)&=G_3^{\rm eqQ}(96,14)=27=-29+56.
\end{align*}
The third line is written as a $d_3^{\rm eqQ}$ value because $d_\infty^{\rm eqQ}(96,14)=0$ there.  These decompositions use $d_2^{\rm eqQ}-d_\infty^{\rm eqQ}$ and $d_3^{\rm eqQ}-d_2^{\rm eqQ}$; they are useful checks, but they are not the genuine $M$-giant sectors~\cite{Murthy-GG,LR}.

The genuine wrapped-brane test is to apply the sector formulae~\cite{LR} sector by sector with the equal-charge projection.

For the coefficient at $(87,27/2)$ this can in fact be checked for the first three sectors: the $M=1$ value by a direct exact computation over integers, and the $M=2,3$ values modulo the two large primes $p_1=10^9+7$ and $p_2=998244353$.  Applying the $M$-giant sector formulae~\cite{LR,Murthy-GG} after the equal-charge projection gives
\begin{equation}
  h_{3,1}^{\rm eqQ}(87,\tfrac{27}{2})=1\quad\text{exactly},
  \qquad
  h_{3,2}^{\rm eqQ}(87,\tfrac{27}{2})
  \equiv
  h_{3,3}^{\rm eqQ}(87,\tfrac{27}{2})
  \equiv0\pmod{p_1p_2}.
  \label{eq:h31-example}
\end{equation}
The one-giant sector is built from the determinant prefactor and the projected single-particle letters
\begin{equation}
  \mc G_3^{(1)}
  =\Big[-\frac{\zeta}{(1-\zeta)^2}\;
  {\rm PE}\!\left[(\zeta+\zeta^{-1}-2)\,\widehat\imath\,\right]\Big]_{\zeta^{-3}},
  \qquad
  \widehat\imath
  =\frac{(1-px^3)(1-p^{-1}x^3)}
  {(1-y_1x^2)(1-y_2x^2)(1-x^2/(y_1y_2))}-1,
  \label{eq:G31}
\end{equation}
where $[\,\cdot\,]_{\zeta^{-3}}$ extracts the wrapping-number-one contribution and $-\zeta/(1-\zeta)^2$ is the single-giant determinant factor of~\cite{LR}.  The first terms of $\widehat\imath$ are
\[
  \widehat\imath
  =x^2\!\left(y_1+y_2+\frac{1}{y_1 y_2}\right)
  -x^3(p+p^{-1})+O(x^4).
\]
To reach the target $x^9 p\,y_1^0 y_2^0\,\zeta^{-4}$ (the determinant factor supplies the remaining $\zeta^{+1}$) at minimal energy, one must choose the $\zeta^{-1}\widehat\imath$ part four times.  The $p$-charge requires the letter $-\zeta^{-1}x^3 p$; the remaining energy is $6$, and the only minimal equal-charge combination is $\zeta^{-1}x^2 y_1$, $\zeta^{-1}x^2 y_2$, $\zeta^{-1}x^2(y_1 y_2)^{-1}$.

Evaluating the PE coefficient gives
\begin{equation}
  [x^9\,p^1\,y_1^0\,y_2^0\,\zeta^{-4}]\,
  {\rm PE}\!\left[(\zeta+\zeta^{-1}-2)\,\widehat\imath\,\right]=-1.
  \label{eq:giant-coeff}
\end{equation}
The remaining two factors are
\[
  [\zeta^1]\!\left(-\frac{\zeta}{(1-\zeta)^2}\right)=-1,
  \qquad
  [x^{78}p^{26}]\,\mc I_\infty^{\rm eqQ}(x,p)=+1.
\]
Hence $h_{3,1}^{\rm eqQ}=(-1)(-1)(+1)=1$.  Concretely, the coefficient~\eqref{eq:giant-coeff} is carried by the four lowest letters,
\begin{equation}
  (\zeta^{-1}x^2y_1)(\zeta^{-1}x^2y_2)(\zeta^{-1}x^2y_1^{-1}y_2^{-1})(-\zeta^{-1}x^3p)
  =-\zeta^{-4}x^9p.
  \label{eq:four-letters}
\end{equation}
The three $x^2$ letters carry charges $y_1$, $y_2$, and $(y_1y_2)^{-1}$, so they form the equal-charge monomial $y_1^0y_2^0$ at energy $2N=6$.  The fourth letter supplies the single unit of $p$.  Together with the large-$N$ factor $[x^{78}p^{26}]\mc I_\infty^{\rm eqQ}(x,p)=1$, this is the minimal one-giant mechanism behind the coefficient at $(87,27/2)$.  No other term in the plethystic expansion can reach the target $x^9p\,y_1^0y_2^0\,\zeta^{-4}$: the energy budget $9=3\times2+3$ and the charge constraints uniquely fix the four letters above.

The exact equality $h_{3,1}^{\rm eqQ}=1$ is the explicit one-giant computation displayed above.  The congruences for $M=2,3$ in~\eqref{eq:h31-example} were obtained by direct implementations that store only nonzero monomials of the genuine $M$-giant $U(M)$ residue formula~\cite{LR}; they are not higher-$\zeta$ coefficients of the one-giant expression~\eqref{eq:G31}.

As further low-energy checks, the same sector implementation reproduces aggregate corrections with nontrivial two-giant contributions, for example
\begin{equation}
  G_3^{\rm eqQ}(24,1)=-426+180=-246,
  \qquad
  G_3^{\rm eqQ}(24,2)=-50+5=-45.
  \label{eq:sector-regression-checks}
\end{equation}

The modular congruences provide strong evidence for the vanishing of the first two higher sectors.  Independently, the exact aggregate relation gives
\begin{equation}
  \sum_{M\ge2}h_{3,M}^{\rm eqQ}(87,\tfrac{27}{2})=0.
  \label{eq:higher-M-remainder}
\end{equation}
Modulo $p_1 p_2$, the sector checks further imply $\sum_{M\ge4}h_{3,M}^{\rm eqQ}\equiv0\pmod{p_1 p_2}$.

The coefficient at $(87,\tfrac{27}{2})$ is therefore not only the result of taking a rank difference: in the $M$-giant sector expansion~\cite{LR}, it is already present in the $M=1$ sector, and the first two higher sectors are zero modulo $p_1p_2$.

Two caveats are kept deliberately.  First, equations~\eqref{eq:h31-example} and~\eqref{eq:higher-M-remainder} do not establish that $h_{3,M}^{\rm eqQ}(87,\tfrac{27}{2})=0$ for each $M\ge2$ individually.  They establish the exact value $h_{3,1}=1$ and the exact aggregate identity that the remaining higher sectors sum to zero.

The modular vanishing of $h_{3,2}$ and $h_{3,3}$ is strong evidence for those two terms, but an independent coefficient bound would be needed to turn the modular congruences into exact integer vanishings.  We do not have a kinematic argument bounding the wrapping number: the only energy constraint we can prove at finite $N$ is the rank-independent kinematic bound $j\ge6|J_R|$ of~\eqref{eq:kinematic}, which does not separate the wrapping sectors.  The support congruence $j-6J_R\in6\mathbb Z_{\ge0}$ already makes the first available energy step above the kinematic floor equal to $6$, which likewise does not separate the wrapping sectors.

Second, as emphasized in~\cite{LR}, the matrix-model sectors are not unique as a physical wrapped-D3 decomposition once $M\ge2$; the statement here is at the level of those matrix-model sectors.  For larger coefficients and near-boundary examples the warning about cancellations above therefore remains in force.  The second target,
\[
  523{,}465{,}400=\sum_{M\ge1}h_{3,M}^{\rm eqQ}(126,14),
\]
has not yet been decomposed into its individual $M$-giant contributions and remains a natural next step.

\section{Discussion and conclusions}
\label{sec:discussion}

The equal-charge projection shows that the high-spin tail is not an artifact of summing over unequal $R$-charge assignments.  The large-$N$ side is completely controlled by the factorization theorem.  By Theorem~\ref{thm:support}, for fixed $J_R$ the multigraviton coefficient obeys an arithmetic support criterion and vanishes throughout
\[
  6J_R\le j<j^*(J_R).
\]
Therefore a nonzero finite-rank coefficient in this interval is not supplied by the large-$N$ multigraviton index.  It is a finite-rank contribution to the equal-charge index, which is $(-1)^F$-graded.

The exact mechanisms are different.  After the plethystic exponential has formed the multigraviton product, the intervals with zero large-$N$ coefficient come from diagonal projection followed by the fact that the large-$N$ prefactor is supported only at pentagonal exponents (Theorem~\ref{thm:main} and eq.~\eqref{eq:formula}).  The rank-independent kinematic boundary comes instead from saturation of a nonnegative single-letter defect: at $j=6J_R$ every letter must have zero defect, reducing the finite-$N$ matrix integral to the standard constant-term identity used in Proposition~\ref{prop:frontier}.  The same defect expansion also gives the line $j=6J_R+6$, through Proposition~\ref{prop:strip}.  Thus the lines $j=6J_R$ and $j=6J_R+6$ are controlled analytically here, while finite-rank corrections can be nonzero in the remaining interior intervals (Table~\ref{tab:main-examples}).

The $U(3)$ computation supplies such coefficients beyond the black-hole bound.  The examples in Table~\ref{tab:main-examples} have
\[
  J_R>J_R^{\rm BH}(j),\qquad
  d_\infty^{\rm eqQ}(j,J_R)=0,
  \qquad
  d_3^{\rm eqQ}(j,J_R)\ne0.
\]

The entry $(j,J_R)=(87,27/2)$ lies far below the multigraviton onset determined by the pentagonal-number condition. It also lies on
$j=6J_R+6$, so its value is the $P=27$ case of Proposition~\ref{prop:strip}. This large depth is arithmetic: it reflects the delayed multigraviton onset rather than proximity to a large-$N$ threshold.  The nearby $j=108$ band gives a complementary pattern: three consecutive beyond-boundary spin sectors have nonzero finite-rank coefficients while the large-$N$ coefficient is zero, followed by a matched onset and the rank-independent kinematic boundary.  Thus the evidence is not only a single extreme coefficient, but also a pattern of nonzero coefficients in neighboring sectors.

The sector check gives the first wrapped-brane refinement of this support statement.  For the coefficient at $(87,27/2)$,
\[
  h_{3,1}^{\rm eqQ}(87,\tfrac{27}{2})=1,\qquad
  h_{3,2}^{\rm eqQ}(87,\tfrac{27}{2})
  \equiv
  h_{3,3}^{\rm eqQ}(87,\tfrac{27}{2})
  \equiv0\pmod{p_1p_2},
\]
with the $M=2,3$ values obtained as two-prime modular $M$-giant sector checks~\cite{LR,Murthy-GG}.  Thus the unit aggregate coefficient is visible in the one-giant sector; it is not only the result of taking a rank difference.  The exact aggregate identity gives
\[
  \sum_{M\ge2} h_{3,M}^{\rm eqQ}(87,\tfrac{27}{2})=0.
\]
Thus the total higher-sector contribution vanishes, although the individual
higher-$M$ terms are not all determined separately by the present computation.
For larger coefficients, such as $G_3^{\rm eqQ}(126,14)=523{,}465{,}400$, the individual $M$-giant contributions~\cite{LR} have not yet been computed.

These conclusions should not be read as formulae for positive degeneracies or entropy.  All coefficients in this paper are coefficients of a $(-1)^F$-graded index~\cite{KMMR}.  They are protected signed quantities, not literal counts of BPS states.  The entropy comparison at $j=66$ therefore compares the exponential scale and spin dependence, not a microscopic state count.  The point established here is narrower: the equal-charge index has finite-rank support in sectors where the multigraviton index vanishes by the pentagonal onset mechanism of Theorem~\ref{thm:support}.

There is one group-theoretic point to keep separate from the main claim.  The main scan is a $U(3)$ scan.  At the same time, because the common center acts trivially on adjoint letters, the refined index factorizes as $\mc I_{U(3)}=\mc I_{U(1)}\mc I_{SU(3)}$.  Therefore the entries in Table~\ref{tab:su3-adjoint-check} are genuine $SU(3)$ coefficients for the displayed targets.  What is missing is not correctness but coverage: we do not perform a full independent $SU(3)$ scan.

The paper contains analytic results, exact computations, and partial sector checks.  Analytically, we prove the equal-charge large-$N$ factorization and its charge-resolved refinement, the onset criterion and the intervals where the large-$N$ coefficient vanishes, the rank-independence of the kinematic boundary, and the formula on the first adjacent line $j=6J_R+6$.  The two-sided support characterization (Theorem~\ref{thm:support}) is proved whenever at most three pentagonal pairs contribute, and is verified computationally through $j=9000$ in the remaining cases.  Computationally, we determine the displayed $U(3)$ coefficients exactly, and Table~\ref{tab:su3-adjoint-check} gives genuine $SU(3)$ coefficients for selected targets.  At the sector level, the one-giant value $h_{3,1}^{\rm eqQ}(87,\tfrac{27}{2})=1$ is explicit, while the $M=2,3$ vanishings are modular checks.  Broader wrapped-brane interpretations, uncomputed higher-$M$ decompositions, and near-boundary support patterns should therefore be read as observations suggested by the data rather than proved statements.

In summary, the high-spin tail survives the exact equal-charge projection, and some of its finite-rank support lies in sectors where the large-$N$ multigraviton coefficient is exactly zero.  Several concrete tests would strengthen these results.  The congruences $h_{3,2}^{\rm eqQ}(87,\tfrac{27}{2})\equiv0$ and $h_{3,3}^{\rm eqQ}(87,\tfrac{27}{2})\equiv0$ at $(87,\tfrac{27}{2})$ could be turned into exact integer statements by proving an upper bound on their absolute values, by carrying out an exact integer computation, or by combining enough modular primes with such a bound.  More individual higher-$M$ equal-charge sectors could be computed~\cite{LR}, and the same test repeated at $U(4)$.  A quantitative comparison with the grey-galaxy phase diagram of~\cite{CJKKLMP,Bajaj-grey} is not immediate, because that analysis fixes the average charge and sums over charge differences, whereas the present projection fixes the charge differences and sums over the common charge; mapping the exact equal-charge tail onto their microcanonical phase boundaries is left for future work.  A full independent $SU(3)$ scan, and a comparison of the same sectors with a positive BPS counting observable~\cite{CaboBizetZBPS,CL-words,CCKLL}, would substantially sharpen the picture.  The analytic framework developed here also suggests systematic finite-rank and sector-by-sector extensions.

\bigskip
\noindent\textbf{Acknowledgements.}
The author is grateful to Leopoldo Pando Zayas for explaining aspects of his results and to Leonardo Santilli for discussions on these topics and valuable comments on the draft.

\appendix
\section{Computational details}
\label{app:computational}

All finite-$N$ index coefficients reported here are integers, computed by multiplying Laurent polynomials stored by their nonzero monomials.  The polynomial multiplications are performed in C, with a Python script used only to launch the runs and collect outputs; the C routine uses checked 64-bit arithmetic over integers.  The direct run over integers for $(j,J_R)=(126,14)$ overflowed during the $P_{\rm TERMS}$ convolution, so this coefficient was recomputed by modular arithmetic.  In the supplementary GPU files, \texttt{p\_exp} denotes the exponent of the angular fugacity~$p$, so \texttt{j=126,p\_exp=28} corresponds to $(j,J_R)=(126,14)$.  We used five independent 30-bit primes
\begin{equation}
  998244353,\;\; 999999937,\;\; 1000000007,\;\; 1000000009,\;\; 1004535809.
\end{equation}
Chinese-remainder reconstruction from four primes (product $\approx1.0\times10^{36}$) gives
\begin{equation}
  d_3^{\rm eqQ}(126,14)=523{,}465{,}399,
\end{equation}
the representative of smallest absolute value modulo the four-prime product; the fifth prime independently confirms this value, extending the combined modulus to approximately $10^{45}$.  The associated multigraviton coefficient is $d_\infty^{\rm eqQ}(126,14)=-1$, hence $G_3^{\rm eqQ}(126,14)=523{,}465{,}400$.  The same five-prime run also reconstructs the sampled charge-difference profile around this point and passes the charge-permutation check $c_3(126,14;1,-1)=c_3(126,14;2,1)$ prime by prime.

The factorization theorem has been independently verified through $j=120$ by a direct constant-term extraction on the refined denominator product, bypassing the $U(3)$ projector entirely.  Independent consistency checks include the $p=1$ reconstruction
\begin{equation}
  d_3^{\rm eqQ}(66,0)
  =D_3(66)-2\sum_{J_R=1}^{11}d_3^{\rm eqQ}(66,J_R)
  =270{,}582{,}770,
\end{equation}
the matching of all equal-fugacity coefficients (where $y_1=y_2=1$ is an independent computation), the onset match $d_3^{\rm eqQ}(108,17)=d_\infty^{\rm eqQ}(108,17)=1$ (the $J_R=17$ onset of Table~\ref{tab:onset}, $j^*(17)=108$; the kinematic boundary at $j=108$ is $J_R=18$), and the kinematic-boundary zero $d_3^{\rm eqQ}(108,18)=0=E(36)$ (Proposition~\ref{prop:frontier}, using~\cite{Macdonald,ZB}).

The high-$j$ charge-resolved extractions use a GPU implementation using modular arithmetic over the same five primes.  Four primes are used for Chinese-remainder reconstruction and the fifth for independent verification.  For every run containing both $\Delta=(1,-1)$ and $\Delta=(2,1)$, the charge-permutation equality $c_3(j,J_R;1,-1)=c_3(j,J_R;2,1)$ is checked prime by prime before including the coefficient.  No charge-profile coefficient failing either the charge-permutation check or the modular verification is included in the reported data set.

The $p=1$ projected total $D_3(j)$ has been computed through $j=90$; the multigraviton index $D_\infty(j)$ has been independently computed from the closed-form product~\eqref{eq:Dinf} and verified against summation over the fixed-$J_R$ data at $j=66$.  Reproducible code and selected exact integer data for the displayed tables, together with the grade list for the $j\le9000$ scan of Theorem~\ref{thm:support}, are included as ancillary files with the arXiv submission.

The genuine $SU(3)$ coefficients of Section~\ref{sec:U1-center} were obtained with a small CPU implementation using modular arithmetic and the same recurrence and pruning logic.  In that port the parameter \texttt{diag} is the zero-weight multiplicity in the adjoint character: \texttt{diag=3} reproduces the $U(3)$ calculation, while \texttt{diag=2} computes the genuine $SU(3)$ coefficients reported in Table~\ref{tab:su3-adjoint-check}.  The choice \texttt{diag=1} should not be identified with the standalone $U(1)$ index or with the multigraviton index, because the six $A_2$ root weights, the Vandermonde factor $W(u,v)$, and the constant-term extraction are still present.

\begin{proof}[Proof of Lemma~\ref{lem:pruning}]
For the $y$-charges, introduce nonnegative integers $n_1,n_2,n_3$ counting minimal letters in the three numerator directions $(y_1,y_2,(y_1y_2)^{-1})$, each with energy requirement~$2$.  The residual $y$-charge equations $n_1-n_3=\delta_1$, $n_2-n_3=\delta_2$ are solved by $n_1=\delta_1+c$, $n_2=\delta_2+c$, $n_3=c$ with minimal $c=\max(-\delta_1,-\delta_2,0)$, giving the $y$-charge energy requirement $2(\delta_1+\delta_2+3\max(-\delta_1,-\delta_2,0))$.
For the $p$-charge, write $r-s=\delta_p$ with $r,s\ge0$; the energy requirement is $3(r+s)\ge3|\delta_p|$.  Since the $p$-charge comes from the denominator directions and the $y$-charge from the numerator directions, the lower bounds are additive.  Any completion exceeding the remaining energy budget $j-e$ is therefore impossible.
\end{proof}

\medskip\noindent\textit{Support congruence at finite $N$.}
A monomial in the $m$-th single-letter contribution has the form
\[
  x^{2m(n_1+n_2+n_3)+3m(r+s)}\,p^{m(r-s)}\,y_1^{m(n_1-n_3)}\,y_2^{m(n_2-n_3)}
\]
with $n_i\in\{0,1\}$ and $r,s\ge0$.  Writing $\Delta_i=Q_i-Q_3$ and $P=2J_R$, every such monomial satisfies $j-3P-2(\Delta_1+\Delta_2)\in 6\mathbb Z$.  The congruence is additive under multiplication and therefore holds for every monomial in the finite-$N$ plethystic expansion.  In the equal-charge sector $\Delta_1=\Delta_2=0$, so $j\equiv 6J_R\pmod{6}$, confirming that $d_N^{\rm eqQ}(j,J_R)$ can be nonzero only when $j-6J_R\in6\mathbb Z$.

\section{Low-degree constant terms with insertions}
\label{app:inserted-constant-terms}

This appendix evaluates the normalized constant terms with insertions
$S_1$, $S_1^2$, and $S_1^3$ used in the proof of
Proposition~\ref{prop:strip}.

\begin{lemma}[Low-degree inserted constant terms]
\label{lem:inserted-constant-terms}
Let
\[
  Z_N(q)=\frac{1}{N!}\CT_{\mathbf z}
  \prod_{i\ne j}(z_i/z_j;q)_\infty
  =(q;q)_\infty^{1-N},
\]
and define the normalized expectation $\langle\cdot\rangle_N$ with
respect to this measure.  With
$S_1=\sum_{a,b=1}^N z_a/z_b$,
one has
\[
  \langle S_1\rangle_N=1-q\qquad(N\ge2),
\]
\[
  \langle S_1^2\rangle_N=2(1-q)^2\qquad(N\ge3),
\]
and
\[
  \langle S_1^3\rangle_N=6(1-q)^3\qquad(N\ge4).
\]
At rank three,
\[
  \langle S_1^3\rangle_3=(1-q)^3(6+q-q^2).
\]
\end{lemma}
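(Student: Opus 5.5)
The plan is to recognise the measure in $Z_N(q)$ as the Macdonald torus weight $\prod_{i\ne j}(z_i/z_j;q)_\infty/(tz_i/z_j;q)_\infty$ at $t=0$. Its orthogonal polynomials are the $q$-Whittaker functions $P_\lambda(\mathbf z;q,0)$, and the normalized expectation
\[
\langle f(\mathbf z)\,g(\mathbf z^{-1})\rangle_N
\]
is the finite-$N$ Macdonald scalar product $\langle f,g\rangle'_{N}$ divided by $\langle 1,1\rangle'_N$. Since $S_1=p_1(\mathbf z)\,p_1(\mathbf z^{-1})$, we have $S_1^k=p_1^k(\mathbf z)\,p_1^k(\mathbf z^{-1})$. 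Hence $\langle S_1^k\rangle_N=\langle p_1^k,p_1^k\rangle'_N/\langle1,1\rangle'_N$, which reduces every claim to a norm computation for a single symmetric function.

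First I expand
\[
p_1^k=\sum_{|\lambda|=k}c_\lambda(q)\,P_\lambda(\mathbf z;q,0),
\]
discarding $\lambda$ with $\ell(\lambda)>N$, since those $P_\lambda$ vanish in $N$ variables. This gives
\[
\langle S_1^k\rangle_N=\sum_{|\lambda|=k,\ \ell(\lambda)\le N}c_\lambda^2\,\frac{\langle P_\lambda,P_\lambda\rangle'_N}{\langle1,1\rangle'_N}.
\]
Next I specialise Macdonald's finite-$N$ norm formula to $t=0$. In the product over $i<j$, every Pochhammer symbol carrying a positive power of $t$ collapses to $1$, except the denominator factor with $t^{j-i-1}$ at $j=i+1$. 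The norm ratio therefore becomes
\[
\prod_{i=1}^{N-1}\frac{(q;q)_\infty}{(q^{\lambda_i-\lambda_{i+1}+1};q)_\infty}=\prod_{i=1}^{N-1}(q;q)_{\lambda_i-\lambda_{i+1}},
\]
with $\lambda_j=0$ for $j>\ell(\lambda)$. If $\ell(\lambda)<N$, the factors for $i\ge \ell(\lambda)+1$ equal $1$, so the ratio is the stable, $N$-independent value. Only partitions with $\ell(\lambda)=N$ see the rank, because then the last gap $(q;q)_{\lambda_{N-1}-\lambda_N}$ is cut off relative to the stable value, which would include a further factor $(q;q)_{\lambda_N}$.

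For the stable regime, with all $\ell(\lambda)\le k<N$, the sum equals the stable Macdonald norm of $p_1^k=p_{(1^k)}$. That norm is $z_{(1^k)}\prod_i(1-q)/(1-t)\big|_{t=0}=k!\,(1-q)^k$. This gives $\langle S_1\rangle_N=1-q$ for $N\ge2$, $\langle S_1^2\rangle_N=2(1-q)^2$ for $N\ge3$, and $\langle S_1^3\rangle_N=6(1-q)^3$ for $N\ge4$. These thresholds are exactly where $(1^k)$ first has length $<N$.

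The main obstacle is the rank-three correction for $S_1^3$, where $\lambda=(1,1,1)$ has $\ell=N$. Here I need the explicit coefficient $c_{(1^3)}$ in $p_1^3$. By duality, $P_{(1^k)}(q,0)=e_k$, so $c_{(1^3)}$ is obtained by pairing $p_1^3$ against the dual basis $Q_{(1^3)}=b_{(1^3)}P_{(1^3)}$; I would compute it from the $q$-Whittaker Pieri rule, or by direct expansion in $e_3$, $P_{21}$ and $P_3$. Its stable norm contains the factor $(q;q)_1=1-q$ from the last gap, and at $N=3$ this factor is dropped. The difference
\[
\langle S_1^3\rangle_3-6(1-q)^3=c_{(1^3)}^2\,\|P_{(1^3)}\|^2_{\rm st}\Bigl(\frac1{1-q}-1\Bigr)
\]
should equal $(1-q)^3\,q(1-q)$, which would reproduce $(1-q)^3(6+q-q^2)$. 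I would cross-check this by brute-force constant-term extraction at $N=3$ to low order in $q$.
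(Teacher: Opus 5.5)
Your proposal is correct, and it reaches the lemma by a different route from the paper. The paper handles the stable range by asserting, with a citation, that for degree $r<N$ the finite-$N$ constant-term pairing agrees with the power-sum product $z_\lambda\prod_i(1-q^{\lambda_i})$. It handles $N=3$ by brute force: it tabulates the eight orbits of exponent vectors in $S_1^3$, their multiplicities, and their normalized $A_2$ constant terms, and sums them.

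You work instead in the $q$-Whittaker basis $P_\lambda(\mathbf z;q,0)$ and specialize Macdonald's finite-$n$ norm formula to $t=0$. This gives $\prod_{i=1}^{N-1}(q;q)_{\lambda_i-\lambda_{i+1}}$, which coincides with the stable norm $b_\lambda(q,0)^{-1}=\prod_{i\ge1}(q;q)_{\lambda_i-\lambda_{i+1}}$ unless $\ell(\lambda)=N$, when the factor $(q;q)_{\lambda_N}$ is absent. Equivalently, Macdonald's factor $c_N(\lambda)$ becomes $1/(q;q)_{\lambda_N}$ at $t=0$. This proves the stable-range statement that the paper only cites, and it confines the whole rank-three deviation to the single partition $(1^3)$.

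The step you left as ``should equal'' closes immediately. Since $P_{(1^3)}=e_3=\tfrac16p_1^3+\cdots$ and power sums are orthogonal, $\langle p_1^3,e_3\rangle_{q,0}=(1-q)^3$. With $b_{(1^3)}(q,0)=1/(1-q)$ this gives $c_{(1^3)}=(1-q)^2$. Then, using $\|P_{(1^3)}\|^2_{\rm st}=1-q$, your difference formula yields $(1-q)^5\cdot q/(1-q)=q(1-q)^4$, which is exactly $(1-q)^3(6+q-q^2)-6(1-q)^3$.

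As for what each route buys: the paper's table is elementary and can be checked by machine without any Macdonald theory. However, its orbit values are stated without derivation, and the computation does not explain the shape of the answer. Your argument explains why non-stability first appears at $k=N$ and why it is a one-term correction. It also generalizes at once: for $k=N$ the same computation gives $c_{(1^N)}=(1-q)^{N-1}$, and hence
\[
  \langle S_1^N\rangle_N=N!(1-q)^N+q(1-q)^{2N-2}.
\]
This correctly reproduces $\langle S_1\rangle_1=1$ and $\langle S_1^2\rangle_2=(1-q)^2(2+q)$.

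The only inputs you need beyond what you state are that $P_\lambda(x;q,t)$ is regular at $t=0$ and that orthogonality and the norm formula survive the specialization. Both are standard; for example, $J_\lambda=P_\lambda$ at $t=0$ and $J_\lambda$ has polynomial coefficients.
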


\begin{proof}
Let $p_r=\sum_{a=1}^N z_a^r$, so that $S_1=p_1p_{-1}$.
Consider the normalized constant-term pairing
\[
  (F,G)_{N,q}
  :=
  Z_N(q)^{-1}\frac1{N!}\CT_{\mathbf z}\,
  F(\mathbf z)\,G(\mathbf z^{-1})
  \prod_{i\ne j}(z_i/z_j;q)_\infty .
\]
Then $\langle S_1^r\rangle_N=(p_1^r,p_1^r)_{N,q}$.
In the stable range $r<N$, this finite-$N$ constant-term pairing agrees with
the Hall--Littlewood/Macdonald power-sum scalar product
for the constant-term normalization used here~\cite{Macdonald-book}
\[
  (p_\lambda,p_\mu)_q
  =
  \delta_{\lambda\mu}\,
  z_\lambda\prod_i(1-q^{\lambda_i}).
\]
Taking $\lambda=\mu=(1^r)$ gives
\[
  (p_1^r,p_1^r)_q=r!(1-q)^r.
\]
This proves
\[
  \langle S_1\rangle_N=1-q\quad(N\ge2),\qquad
  \langle S_1^2\rangle_N=2(1-q)^2\quad(N\ge3),
\]
and
\[
  \langle S_1^3\rangle_N=6(1-q)^3\quad(N\ge4).
\]

It remains to compute the first non-stable case $\langle S_1^3\rangle_3$.
A direct type-$A_2$ constant-term calculation gives
\[
\begin{array}{c|c|c}
\alpha & \text{multiplicity in } S_1^3 &
\langle z^\alpha\rangle_3 \\ \hline
(0,0,0) & 93 & 1\\[3pt]
(1,0,-1) & 360 & -\dfrac{2+q}{6}\\[6pt]
(2,0,-2) & 90 & -\dfrac{1+2q^2}{6}\\[6pt]
(2,-1,-1) & 72 & \dfrac{1+q+q^2}{3}\\[6pt]
(1,1,-2) & 72 & \dfrac{1+q+q^2}{3}\\[6pt]
(3,0,-3) & 6 & \dfrac{q+2q^2+2q^4+q^5}{6}\\[6pt]
(3,-1,-2) & 18 & -\dfrac{q+q^2+q^4}{6}\\[6pt]
(2,1,-3) & 18 & -\dfrac{q+q^2+q^4}{6}
\end{array}
\]
where $\alpha$ denotes the sorted exponent vector and the expectation is
normalized by $Z_3(q)$.  Multiplying by the displayed multiplicities and
summing yields
\[
  \langle S_1^3\rangle_3
  =
  6-17q+14q^2-4q^4+q^5
  =
  (1-q)^3(6+q-q^2).
\]
\end{proof}

\section{Onset derivations}
\label{app:onset-proofs}

This appendix collects the derivations of the delayed-onset families deferred from Section~\ref{sec:zero-intervals}.

Write
\[
  \omega(k)=\frac{k(3k-1)}{2}.
\]
The divisor form of the pentagonal-pair condition is
\begin{equation}
  \omega(r)-\omega(s)
  =\frac{(r-s)\bigl(3(r+s)-1\bigr)}{2}.
  \label{eq:pent-factor}
\end{equation}
Setting $u=r-s$ and $v=3(r+s)-1$, a valid factorization must satisfy
\begin{equation}
  v\equiv -1\pmod{3},
  \qquad
  u\equiv\frac{v+1}{3}\pmod{2}.
  \label{eq:divisor-cond}
\end{equation}

\begin{proof}[Proof of Proposition~\ref{prop:onset-divisor}]
Given $r,s\in\mathbb Z$ with $\omega(r)-\omega(s)=m$, set $u=r-s$ and
$v=3(r+s)-1$.  Then $uv=2m$ by~\eqref{eq:pent-factor}, the congruence
$v\equiv-1\pmod3$ holds identically, and $u-(v+1)/3=u-(r+s)=-2s$ is even, so
the factorization is admissible.  Conversely, an admissible factorization
determines
\[
  r=\frac{(v+1)/3+u}{2},
  \qquad
  s=\frac{(v+1)/3-u}{2},
\]
which are integers by the parity condition and satisfy
$\omega(r)-\omega(s)=m$.  The map $\omega$ is injective on $\mathbb Z$:
$\omega(r)=\omega(s)$ forces $(r-s)(3(r+s)-1)=0$, and $3(r+s)=1$ is
impossible.  Hence $(u,v)\leftrightarrow(r,s)\leftrightarrow(b,a)
=(\omega(s),\omega(r))$ is a bijection between admissible factorizations and
active pairs $b\in B_m$, and distinct admissible factorizations give distinct
$b$, hence distinct candidate energies $j=6J_R+6b$; in particular the
minimizer of~\eqref{eq:onset-min} is unique.  For the energy formula,
$r^2+s^2=\tfrac12\bigl[((v+1)/3)^2+u^2\bigr]$ gives
\[
  a+b=\frac{3(r^2+s^2)-(r+s)}{2}
  =\frac{(v+1)(v-1)}{12}+\frac{3u^2}{4}
  =\frac{9u^2+v^2-1}{12},
\]
hence $j=3(a+b)=(9u^2+v^2-1)/4$, which is~\eqref{eq:onset-energy}; minimizing
$j$ at fixed $m$ is the same as minimizing $b$, giving~\eqref{eq:onset-min}.
At $j=j^*(J_R)$ only the minimizing pair contributes, with $n=0$
in~\eqref{eq:formula}, so the onset coefficient is
\[
  E(a_*)E(b_*)\,\pfn(0)^3
  =(-1)^{r_*}(-1)^{s_*}
  =(-1)^{(v_*+1)/3},
\]
which is~\eqref{eq:onset-sign}.  Finally, $(u,v)=(-2m,-1)$ is admissible
($(v+1)/3=0$ and $u$ is even), corresponding to $(r,s)=(-m,m)$ with candidate
energy $9m^2$.
\end{proof}

The two delayed-onset families used in the main text are
\begin{equation}
  j^*(2^{n-1})=2^{2n}+2=4J_R^2+2,
  \qquad n\ge2,
  \label{eq:pow2}
\end{equation}
and, for $m=2J_R=2^\beta3^\alpha$ with $\alpha\ge1$,
\begin{equation}
  j^*(J_R)
  =3m+6\,\omega\!\left(
  \frac{1+(-1)^\beta(2^{\beta+1}-3^{\alpha+1})}{6}
  \right).
  \label{eq:twoadic}
\end{equation}

\paragraph{Power-of-two family~\eqref{eq:pow2}.}
The goal is to find the smallest onset $j^*$ when $m=2J_R=2^n$.
Every valid factorization of $uv=4J_R=2^{n+1}$ satisfying~\eqref{eq:divisor-cond} gives a candidate onset value $j=(v^2+9u^2-1)/4$; the minimum over all such factorizations is $j^*(J_R)$.

The trivial pair $(u,v)=(-2m,-1)$ always exists and gives $j=9m^2=36J_R^2$.  We show that the only other admissible factorization is the consecutive-index pair $(u,v)=((-1)^n,(-1)^n 2^{n+1})$ with $|u|=1$, giving $j^*=2^{2n}+2$.

To see this, write $v=\pm 2^b$ (since $v$ must divide $2^{n+1}$).  The congruence $v\equiv -1\pmod{3}$ fixes the sign of $v$ in terms of~$b$.  For every admissible $b\ge1$, the integer $(v+1)/3$ is odd.  The parity condition in~\eqref{eq:divisor-cond} therefore forces $u$ to be odd.  Since $uv=2^{n+1}$, this gives $|u|=1$, i.e.\ the consecutive-index pair.  The remaining case $v=-1$ gives the trivial pair.  Comparing the two values of $(v^2+9u^2-1)/4$: since $2^{2n}+2<36\cdot2^{2n-2}$ for $n\ge2$, the consecutive-index pair gives the smaller onset.

\paragraph{Two-adic family~\eqref{eq:twoadic}.}
Here $m=2J_R=2^\beta3^\alpha$ with $\alpha\ge1$.  The strategy is the same: enumerate all admissible factorizations of $uv=4J_R$ and find the minimum onset.

Since $v\equiv -1\pmod{3}$, the divisor $v$ cannot contain a factor of~$3$.  Hence $v=\pm 2^b$ and $u=\pm 2^{\beta+1-b}3^\alpha$.  The case $b=0$ gives the trivial factorization $v=-1$, with onset $j=9m^2$.

For $b\ge1$, the congruence $v\equiv -1\pmod{3}$ fixes the sign of~$v$, and $(v+1)/3$ is odd.  The parity condition then forces $u$ to be odd, so $b=\beta+1$ (absorbing all powers of~$2$ into~$v$).  This leaves exactly one non-trivial pair: $(u,v)=((-1)^\beta 3^\alpha,\,(-1)^\beta 2^{\beta+1})$.  One verifies directly that $v\equiv-1\pmod{3}$ holds (since $2^{\beta+1}\equiv(-1)^{\beta+1}\pmod{3}$) and that the parity condition is satisfied.  The onset from this pair is~\eqref{eq:twoadic}, which is smaller than $9m^2$.

\paragraph{Support above onset.}
The following properties hold for the multigraviton coefficient $d_\infty^{\rm eqQ}(j,J_R)$ above the onset energy $j\ge j^*(J_R)$:
\begin{enumerate}
\item[(i)] every active pentagonal pair contributes in every allowed energy grade above its onset;
\item[(ii)] each individual pentagonal-pair contribution is nonzero;
\item[(iii)] exact cancellation between two pentagonal-pair contributions occurs only at $(j,J_R)=(9,\tfrac12)$;
\item[(iv)] exact cancellation among three pentagonal-pair contributions is impossible;
\item[(v)] no cancellation among four or more pentagonal-pair contributions occurs at any grade with $j\le9000$.
\end{enumerate}

\begin{proof}[Proofs of (i)--(v)]
For~(i), fix an active pair $b\in B_m$.  Its contribution at $j=6J_R+6b+6n$ is $E(m+b)E(b)\,\pfn(n)^3$.  Since $E(m+b)E(b)=\pm1$ and $\pfn(n)>0$ for all $n\ge0$, this individual pair contributes nontrivially in every allowed energy grade above its onset.  Separately, the identity $\omega(-m)-\omega(m)=m$ shows that $B_m$ is nonempty for every $m\ge0$.

Statement~(ii) holds because $\pfn(n)>0$ for all $n\ge0$.

For~(iii), we must show that two-term cancellation among pentagonal pairs is extremely rare.  Suppose exactly two pairs $(b_1,a_1)$ and $(b_2,a_2)$ with $b_1<b_2$ contribute at some~$j$.  Write $n_i=(j-6J_R-6b_i)/6$.  Since $b_1<b_2$ we have $n_1>n_2\ge0$.  Cancellation requires
\[
  E(a_1)E(b_1)\,\pfn(n_1)^3 +E(a_2)E(b_2)\,\pfn(n_2)^3=0,
\]
so the two terms must have opposite signs and equal absolute values: $\pfn(n_1)^3=\pfn(n_2)^3$.  Since $\pfn(n)$ is strictly increasing for $n\ge1$~\cite{Andrews}, equality forces $\{n_1,n_2\}=\{1,0\}$ (both giving $\pfn=1$).  This means $b_2-b_1=1$ (consecutive pentagonal numbers) and $n_2=0$ (so $j=6J_R+6b_2$).

The only consecutive integers that are both generalized pentagonal are $(0,1)$ and $(1,2)$.  Taking $b_1=0,\,b_2=1$ requires $m=2J_R$ and $m+1$ both pentagonal with opposite sign products $E(m)E(0)=-E(m{+}1)E(1)$; this holds only for $m=1$.  Taking $b_1=1,\,b_2=2$ requires $m+1$ and $m+2$ both pentagonal, giving only $m=0$; but $m=0$ has same-sign products.  Hence $J_R=\tfrac12$ is the unique case, and two-term cancellation occurs only at $j=9$.

For~(iv), a cancellation among exactly three nonzero terms would have one term
on one side and two on the other, hence would imply
$X^3=Y^3+Z^3$ for positive integers $X,Y,Z$ after setting
$X=\pfn(n_i)$ for the term of one sign and $Y,Z$ for the two terms of the
opposite sign.  Euler's $n=3$ case of Fermat's theorem~\cite{HW} rules this out.  The
only partition-value collision at the bottom, $\pfn(0)=\pfn(1)=1$, would give
$X^3=2$, which is also impossible in integers.

Statement~(v) is verified computationally by an exhaustive scan of all relevant grades.  Four or more
pairs can contribute at $(j,J_R)$ only when $j\ge j_4(m)=3m+6b_4$, with $b_4$
the fourth-smallest element of $B_m$; since $j_4(m)\ge3m$, only $m\le3000$ is
relevant for $j\le9000$, and the census of admissible factorizations over
this range shows that the smallest fourth-pair onset in the entire plane is
$j_4(5)=225$, at $(j,J_R)=(225,\tfrac52)$, where $B_5=\{0,2,7,35\}$ and the
contributions are $\pfn(35)^3-\pfn(33)^3-\pfn(28)^3+\pfn(0)^3\ne0$.  All
$108{,}102$ grades with four or more contributing pairs and $j\le9000$ have
been checked to be nonzero.
\end{proof}


\section{Additional data}
\label{app:omitted-data}

Additional computations have been performed but are not needed for the main claim about coefficients with $d_\infty^{\rm eqQ}=0$ and are therefore not included here.  These include equal-fugacity comparisons across $N=2$--$5$, spin-summed $p=1$ totals with saddle-point asymptotics, the complete $j=90$ spin-resolved slice, and charge-difference profile tables.  Extended data and reproducible code beyond the ancillary files are available from the author upon request.

\end{document}